\newcommand{\cL}{\mathcal{L}}
\newcommand{\R}{\mathbb{R}}
\newcommand{\E}{\mathbb{E}}
\newcommand{\Px}{\mathbb{P}}
\newcommand{\indi}[1]{\mathbbm{1}_{#1}}
\newcommand{\varmaj}{H}
\newcommand{\varmin}{h}
\newcommand{\blanc}[1]{\vspace{#1\baselineskip}}
\newtheorem{nt_theorem}{Theorem}
\newtheorem{nt_proposition}[nt_theorem]{Proposition}
\newenvironment{proposition}{\blanc{0.05}\begin{nt_proposition}---}{\end{nt_proposition}\blanc{0.05}}
\newtheorem{nt_corollaire}[nt_theorem]{Corollary}
\newtheorem{nt_definition}[nt_theorem]{Definition}
\newenvironment{definition}{\blanc{0.05}\begin{nt_definition}---}{\end{nt_definition}\blanc{0.05}}
\newtheorem{nt_lemma}[nt_theorem]{Lemma}
\newenvironment{lemma}{\blanc{0.05}\begin{nt_lemma}---}{\end{nt_lemma}\blanc{0.05}}
\newtheorem{nt_remark}[nt_theorem]{Remark}
\newenvironment{remark}{\blanc{0.05}\begin{nt_remark}---}{\end{nt_remark}\blanc{0.05}}
\newtheorem{nt_exemple}[nt_theorem]{Example}
\newenvironment{proof}{{\textit{Proof : }}}{\hfill$\Box$
\\\bigskip}
\newcounter{hypo}
\newcommand*{\symm}{{\mathcal{S}_d(\mathbb R)}}
\newcommand*{\genm}{{\mathcal{M}_d(\mathbb R)}}
\newcommand*{\posm}{{\mathcal{S}_d^+(\mathbb R)}}
\newcommand{\fctgm}{g}
\newcommand*{\dohypo}{\textbf{(${\mathcal H}$\thehypo)}}
\newcommand*{\Tr}{\textup{Tr}} 
\def\hypref#1{\hyperref[hyp:#1]{(${\mathcal H}$\ref*{hyp:#1})}}
\def\hypreff#1#2{\hyperref[hyp:#2]{(${\mathcal H}$\ref*{hyp:#1}-{\it
\ref*{hyp:#2})}}}
\title{Smile with the Gaussian term structure model}%
\author{Abdelkoddousse Ahdida, Aur\'{e}lien Alfonsi\footnotemark[1], Ernesto Palidda\thanks{Universit\'e Paris-Est, CERMICS, Projet MATHRISK
    ENPC-INRIA-UMLV, 6-8 avenue Blaise Pascal, 77455 Marne La Vall\'ee, France. \newline E-mails : ahdida.abdel@gmail.com, alfonsi@cermics.enpc.fr, ernesto.palidda@gmail.com. \newline
Most of this work has been done when Ernesto was working for the Groupe de Recherche Op\'erationnelle of Cr\'edit Agricole. We thank Christophe Michel, Victor Reutenauer, Anas Benabid and Nicole El Karoui for interesting and fruitful discussions around this paper.  This research also benefited from the support of the ``Chaire Risques Financiers'', Fondation du Risque.
}}%
\date{\today}%
\begin{document}
\maketitle

\vspace{-0.4cm}
\begin{abstract} We propose an affine extension of the Linear Gaussian term structure Model (LGM) such that the instantaneous covariation of the factors is given by an affine process on semidefinite positive matrices. First, we set up the model and present some important properties concerning the Laplace transform of the factors and the ergodicity of the model. Then, we present two main numerical tools to implement the model in practice. First, we obtain an expansion of caplets and swaptions prices around the LGM. Such a fast and accurate approximation is useful for assessing the model behavior on the implied volatility smile. Second, we provide a second order scheme for the weak error, which enables to calculate exotic options by a Monte-Carlo algorithm. These two pricing methods are compared with the standard one based on Fourier inversion. 
\end{abstract}

{\bf Keywords :} {\it Affine Term Structure Model, Linear Gaussian Model, Wishart processes, Price expansion, Discretization scheme, Caplets, Swaptions}

\section*{Motivation and overview of the paper}

Affine Term Structure Models (ATSM) are an important class of models for interest rates that include the classical and pioneering models of Vasicek~\cite{Vasicek} and Cox-Ingersoll-Ross~\cite{CIR1}. These models have been settled and popularized by the papers of Duffie and Kan~\cite{DK1996}, Dai and Singleton~\cite{DS1997} and Duffie, Filipovi\'c and Schachermayer~\cite{DFS}. We refer to Filipovi\'c~\cite{Filipovic} for a textbook on these term structure models. The Linear Gaussian Model (LGM) is a simple but important subclass of ATSM that assumes that the underlying factors follow a Gaussian process. It has been considered by  El Karoui and Lacoste~\cite{NEK1991} and El Karoui et al.~\cite{NEK1992}, and has now become a market standard for pricing fixed income derivatives, thanks to its simplicity. However, this model has a main drawback to be calibrated to market data: it produces implied volatility smiles that are flat. %

The goal of this paper is to present a quite natural extension of the LGM that keeps the affine structure and generates an implied volatility smile. To do so, we consider an affine diffusion  of Wishart type on the set of semidefinite positive matrices and replace, roughly speaking, the constant volatility matrix by (a linear function of) this process. The dependence between the factors and their volatility is made through a specific covariation that keeps the affine structure and that has been proposed by Da Fonseca et al.~\cite{F} in an equity framework. Thanks to this, the proposed model which is a stochastic variance-covariance affine term structure model (see Definition~\ref{ModelDef}), is able to produce an implied volatility smile. It  has many parameters and may seem at first sight difficult to handle. For this reason, we present it as a perturbation of the LGM. Thus, the calibration of the model to market data can be made in two steps: first, one can calibrate the LGM and then calibrate the new parameters to the implied volatility smile. The calibration of this model is discussed on some cases in Palidda~\cite{Palidda}. In the present paper, we do not tackle the practical calibration issue: our goal is just to set up the model and give the main numerical methods for a practical use of this model. Namely, we define in Section~\ref{Sec_defmodel} the model and present some important properties such as the value of the Laplace transform under the initial and forward measures or the ergodicity property. Then, we give two tools that are important to implement the model in practice. First, we present in Section~\ref{Sec_PriceExp} a price expansion for caplets and swaptions around the LGM when  the volatility of the volatility of the factor~$Y$ is small. These explicit formulas are useful to calculate quickly the impact of the parameters on the volatility cube and thus to calibrate the model. Second, we propose in Section~\ref{Sec_scheme} a discretization scheme for the model that is of second order for the weak error. Having an accurate scheme is important in practice since it allows to calculate exotic options by a Monte-Carlo algorithm. Besides, this scheme can be easily adapted to other models relying on the same affine structure such as the one of Da Fonseca et al.~\cite{F}. Last, Section~\ref{sec_comp_meth} compares the expansion and the Monte-Carlo method with the classical Fourier technique popularized by Carr and Madan~\cite{CM} and indicates the relevance of each method.

\section{The Linear Gaussian Model (LGM) in a nutshell}\label{Preliminaries}

The model that we present is meant to extend the classical LGM, and we need thus to recall briefly the LGM. We work under a risk-neutral measure~$\mathbb{P}$, and consider a $p$-dimensional standard Brownian motion~$Z$. Let $Y$  be the solution of the following Ornstein-Uhlenbeck SDE
\begin{equation}\label{LGMdynamics}
    Y_t=y+\int_0^t\kappa(\theta -Y_s) ds +\int_0^t\sqrt{V}dZ_s,
\end{equation}
\noindent where   $\kappa \in \mathcal M_p\left(\R\right)$ is a matrix of order~$p$,  $V$ is a semidefinite positive matrix of order~$p$ and $\theta \in \R^p$.
The LGM assumes that the spot rate is an affine function of the vector $Y$:
\begin{equation}\label{SpotRateLGM}
    r_t=\varphi+\sum_{i=1}^p Y^i_t,
\end{equation}
and the coordinates $Y^i$ are usually called the factors of the model. It is not restrictive to assume that the weight of each factor in~\eqref{SpotRateLGM} is the same for all factors and equal to one: if we had $r_t=\varphi+\sum_{i=1}^p m_i Y^i_t$, we could check easily that $(m_1Y^1,\dots,m_pY^p)^{\top}$ is also an Ornstein-Uhlenbeck process. Affine Term Structure models generally assume that the parameters (here $\kappa$, $\theta$ and $V$) are fixed and are valid over a long time period, while the factors (here the vector~$Y$) evolve and reflect the current state of the market. Therefore, one often assumes that the process~$Y$ is stationary to reflect some market equilibrium. Also, the factors are usually associated to different time scales: a factor with a small (resp. large) mean-reversion will influence the long-term (resp. short-term) behaviour of the interest rate.
This leads to assume that $$\kappa=\mathrm{diag}(\kappa_1,\dots,\kappa_p) \text{ with } 0<\kappa_1<\dots<\kappa_p, $$
 and we work under this assumption in the sequel. It can be easily checked (see for example Andersen and Piterbarg~\cite{Piterbarg2010}) that any linear Gaussian model such that $\kappa$ has distinct positive eigenvalues can be rewritten, up to a linear transformation of the factors, within the present parametrization.

Let $(\mathcal{F}_t)_{t\ge 0}$  denote the natural filtration of~$Z$. For $0\le t\le T$, the price $P_{t,T}= \E\left[ \exp\left(-\int_t^T r_s ds \right)|\mathcal{F}_t \right]$  at time~$t$ of the zero-coupon bond with maturity~$T$ is an exponential affine function of~$Y$:
\begin{equation}\label{ZCLGM}
    P_{t,T}=\exp(E(T-t)+B(T-t)^{\top}Y_t),
\end{equation}
where $B(\tau) =-(\kappa^{\top})^{-1}(I_p-e^{-\kappa^{\top}\tau}) \mathbf{1}_p$ and $E(\tau) = -\varphi\tau+\int_0^\tau B(s)^{\top}\kappa\theta+\frac{B(s)^{\top}VB(s)}{2} ds$ for $\tau \ge 0$. Here, $\mathbf{1}_p$ stands for the vector in $\R^p$ that has all its entries equal to one.  The function $B(\tau)$ maps the factors variations $\Delta Y$ on the yield curve variations and is often called the support function.  The factors $Y^i$ associated with the larger parameters $\kappa_i$ impact on the short term behaviour of the yield curve  while the one  associated with the smaller parameters $\kappa_i$ will drive more the long term behaviour.

\noindent We now briefly introduce some of the basic notions on the interest rates vanilla option market. The most liquid traded interest rates options are swaptions and caplets. They are respectively expressed with respect to the forward Libor rate and the forward swap rate, which are defined as follows for $0\le t \le T$, $\delta>0$ and $m \in \mathbb{N}^*$:
\begin{eqnarray*}
 L_t(T,\delta) &=& \frac{1}{\delta}\left(\frac{P_{t,T}}{P_{t,T+\delta}}-1\right) \\
 S_t(T,m) &=& \frac{P_{t,T}-P_{t,T+m\delta}}{\delta \sum_{i=1}^mP_{t,T+i\delta}}.
\end{eqnarray*}

\noindent The prices of caplets and swaptions are respectively given by

\begin{eqnarray*}
  C_t(T,\delta,K) &=& \E\left[e^{-\int_t^{T+\delta}r_sds}\left(L_{T}(T,\delta)-K\right)^+ \bigg|\mathcal F_t\right] \\
  \mathrm{Swaption}_t(T,m,\delta,K) &=& \E\left[e^{-\int_t^{T}r_s ds}\sum_{i=1}^m \delta P_{T,T+i\delta}\left(S_{T}(T,m)-K\right)^+ \bigg|\mathcal F_t\right].
\end{eqnarray*}
Caplets are usually available for short tenors~$\delta$ (up to 1 year) and swaptions are quoted for tenors~$m\delta$ from 2 to 30 years. The market practice is to apply a standard change of numeraire technique (see Geman et al.~\cite{NEK95}) and rewrite the above expressions as

\begin{eqnarray}
  \label{CapletPricing} C_t(T,\delta,K) &=& P_{t,T+\delta}\E^{T+\delta}\left[\left(L_{T}(T,\delta)-K\right)^+|\mathcal F_t\right] \\
  \label{SwaptionPricing} \mathrm{Swaption}_t(T,m,\delta,K) &=& \left( \sum_{i=1}^m\delta P_{t,T+i\delta} \right) \E^{A}\left[\left(S_{T}(T,m) -K\right)^+|\mathcal F_t\right],
\end{eqnarray}

\noindent where $\E^{T+\delta}$ (resp. $\E^{A}$) denotes the expectation taken with respect to the measure $T+\delta$-forward neutral (resp. annuity) measure associated with the numeraire $P_{t,T+\delta}$ (resp. $\sum_{i=1}^m \delta P_{t,T+i\delta}$).   The market prices are then quoted and analyzed in terms of either the log-normal or normal implied volatility obtained by inverting respectively the pricing formulas (\ref{CapletPricing}) and (\ref{SwaptionPricing}) w.r.t. the Black-Scholes and Bachelier formulas. Within the LGM model, the log-normal implied volatility of the caplet is given by
$$\int_t^{T} [B(T-u)-B(T+\delta-u)]^{\top} V  [B(T-u)-B(T+\delta-u)] du, $$
which is a particular case of formula~\eqref{vcaplet} below. This implied volatility does not depend on the strike. It shows that the mean-reversion parameter $\kappa$ plays a role in shaping the form of the caplets volatility cube, according to the different time scales. The role of the diagonal coefficients of the matrix $V$ is determined by the support functions $m_{ii}(\tau,\delta)=(\frac{1-e^{-\kappa_i\delta}}{\kappa_i})^2\frac{1-e^{-2\kappa_i\tau}}{2\kappa_i\tau}$.  The effect of off-diagonal elements of the variance-covariance matrix $V$ is determined by the support functions $m_{ij}(\tau,\delta)=\frac{1-e^{-\kappa_i\delta}}{\kappa_i}\frac{1-e^{-\kappa_j\delta}}{\kappa_j}\frac{1-e^{-(\kappa_i+\kappa_j)\tau}}{(\kappa_i+\kappa_j)\tau}$. These functions are plotted in Figure~\ref{SupportFunctionVol}.

Also, by using a standard approximation, we can obtain the normal implied volatility of the swaptions:
$$ \int_t^{T} [B^S(u)]^{\top} V  B^S(u) du,$$
with $B^S(u)=\omega^0_0B(T-u)-\omega^m_0B(T+m\delta-u)-S_0(T,m,\delta)\sum_{k=1}^m\omega^k_0B(T+k\delta-u)$, $\omega^k_0 = \frac{P_{0,T+k \delta}}{\sum_{i=1}^m P_{0,T+i \delta}}$. This is a particular case of formula~\eqref{VolSwaption} below. This implied volatility has a rather similar structure as the one of the caplets, but it is not time homogeneous.
\begin{figure}[h!]
\centering
\includegraphics[scale = 0.28]{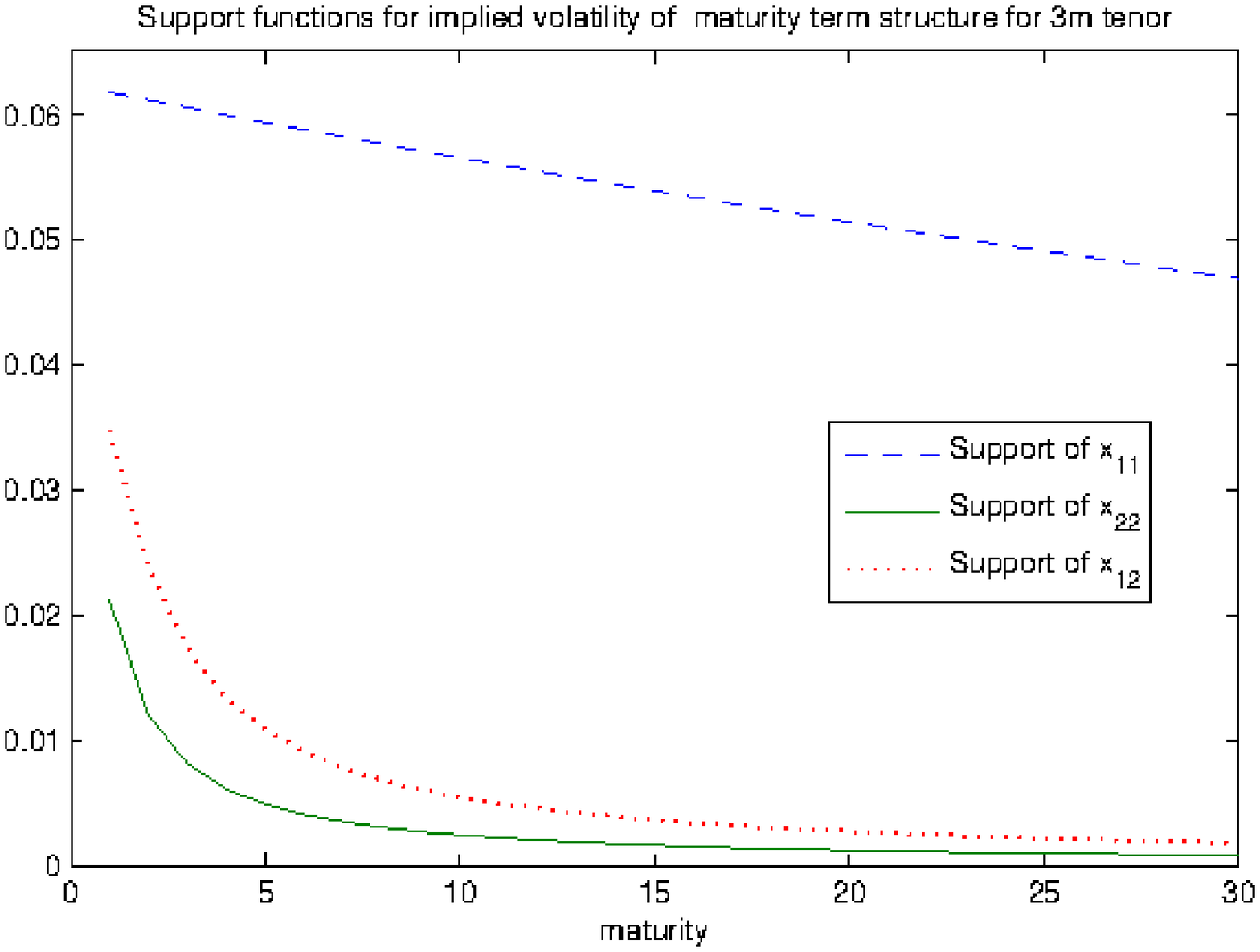}
\includegraphics[scale = 0.28]{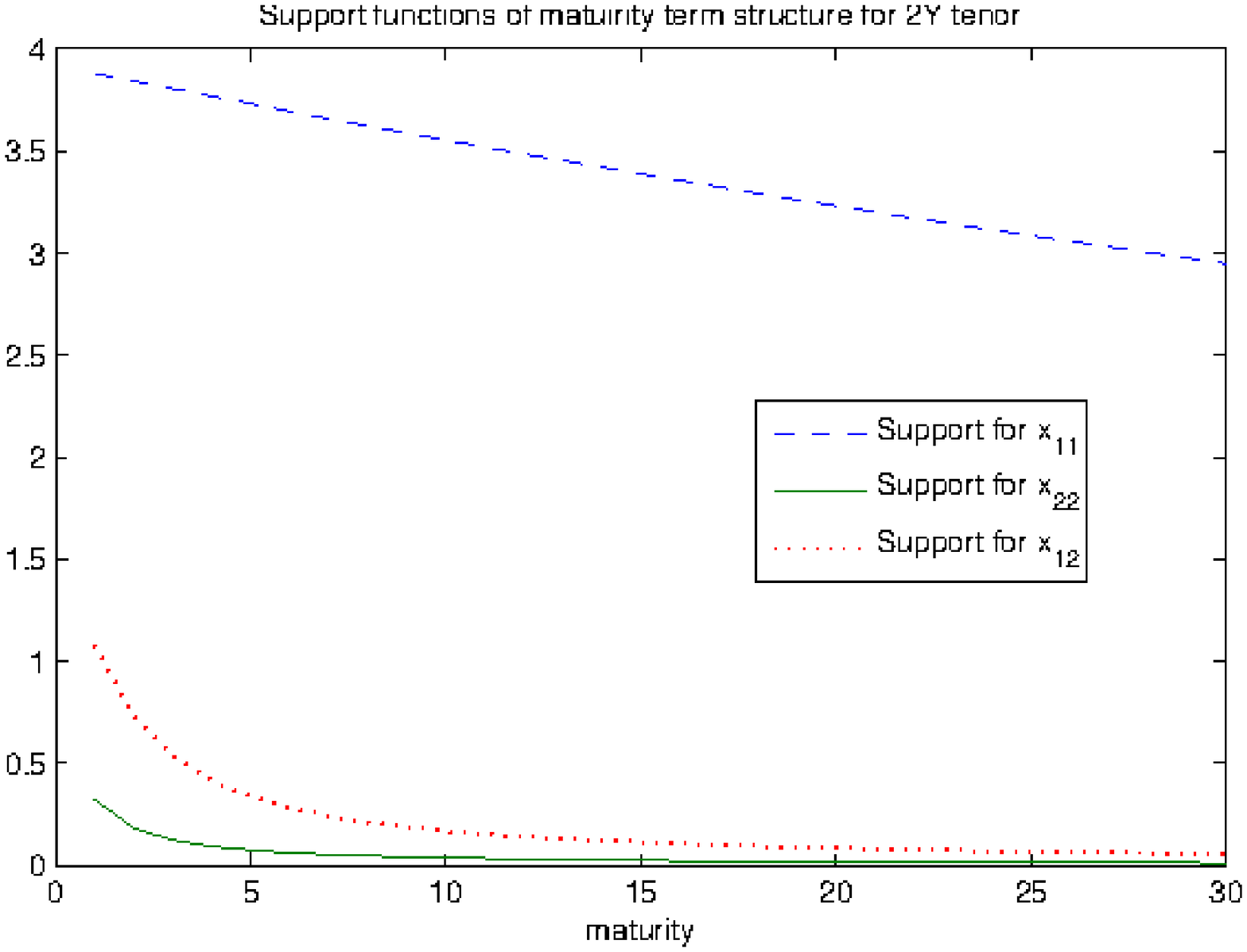}\\
\caption{Support functions for the volatility term structure in a two factors model with $\kappa=\mathrm{diag}(0.01,1)$ for a 3 months (left) and 2 years (right) maturity.}
\label{SupportFunctionVol}
\end{figure}
 Both implied volatilities for caplets and swaptions do not depend on the strike and give thus a flat smile, which is a well-known fact. This is unfortunate if one aims to reproduce the volatility cube observed on market data (i.e. the implied volatility with respect to the maturity~$T$, the tenor $\delta$ or $m\delta$ and the strike $K$). The extension of the LGM that we introduce in Section~\ref{Sec_defmodel} is meant to correct this drawback.


\section{An affine extension of the LGM with stochastic covariance}\label{Sec_defmodel}

This section is devoted to the definition of the model that we study in this paper. This model is a stochastic variance-covariance perturbation of the LGM. We chose a quite general specification that keeps the model affine and gives a stochastic instantaneous covariance for the factors, which will generate a smile for the Caplets and Swaptions. We first present the dynamics of the factor and then present some properties of the model that rely on its affine structure.

\subsection{State variables dynamics}\label{StateVarSec}

We consider $W$ a $d$-by-$d$ square matrix made of independent standard Brownian motions, and $Z$ an independent Brownian motion of dimension~$p$. We will denote by $(\mathcal{F}_t)_{t\ge 0}$ the filtration generated by $(W,Z)$. We consider the following SDE for the state variables (or factors)
\begin{eqnarray} \label{SDE_GeneralY}
 Y_t &=&y +\int_0^t\kappa(\theta -Y_s) ds + \int_0^tc\sqrt{X_s}\left[\bar\rho dZ_s + dW_s\rho \right]\\
X_t &=&x +\int_0^t\left( \Omega +(d-1)\epsilon^2I^n_d  + bX_s+X_sb^{\top} ) \right )ds + \epsilon\int_0^t\sqrt{X_s}dW_sI^n_d + I^n_ddW_s^{\top} \sqrt{X_s}. \label{SDE_General}
\end{eqnarray}
\noindent The matrix $I^n_d$ is defined for $0\le n\le d$ by $(I^n_d)_{i,j}=\indi{i=j\le n}$, and the parameters are taken as follows
\begin{eqnarray}\label{SDE_General_Params}
   & &\nonumber x, \Omega \in \posm, b\in \genm, \epsilon\in \R_+, \ y,\theta \in \R^p, \kappa=\mathrm{diag}(\kappa_1,...,\kappa_p) \text{ with } \kappa_1,...,\kappa_p>0,\\
   & & c\in \mathcal{M}_{p\times d}(\R), \rho\in \R^d \text{ such that } |\rho|^2:=\sum_{i=1}^d \rho_i^2 \leq 1 \quad\mathrm{and}\quad \bar\rho=\sqrt{1-|\rho|^2},
\end{eqnarray}
where $\posm$, $\genm$, and $\mathcal{M}_{p\times d}(\R)$ denote respectively the set of semidefinite positive matrices of order $d$, the set of square matrices of order~$d$, and the set of matrices with $p$ rows and $d$ columns. The process $X$ is an affine diffusion on~$\posm$, and the instantaneous covariance at time~$t$ of the factors $Y$ is given by $cX_tc^\top$. When $\epsilon=0$ and $\Omega=-bx-xb^{\top}$, we have $X_t=x$ and get back the Gaussian model with $V=cxc^\top$. The dependence structure between $Y$ and $X$ through the driving Brownian motions is the same as the one proposed by Da Fonseca, Grasselli and Tebaldi~\cite{F}. As explained in~\cite{F}, this is the most general way to get a non trivial instantaneous correlation between $Y$ and $X$ while keeping the affine structure. In particular, the instantaneous quadratic covariations are linear with respect to~$(Y,X)$ and we have for $1\le i,j,k,l\le d$ and $1\le m,m'\le p$:
\begin{align}\label{crochets_1}
\langle d(Y_t)_m,d(Y_t)_{m'} \rangle&=(cX_tc^\top)_{m,m'}dt, \\
  \langle d(X_t)_{i,j},d(X_t)_{k,l} \rangle&=\epsilon^2 \left[(X_t)_{i,k}\indi{j=l\le n}+(X_t)_{i,l}\indi{j=k\le n} + (X_t)_{j,k}\indi{i=l\le n}+  (X_t)_{j,l}\indi{i=k\le n} \right]dt ,\label{crochets_2} \\
\langle d(Y_t)_m,d(X_t)_{i,j} \rangle&=\epsilon \left[ (cX_t)_{m,i} (I^n_d \rho)_j + (cX_t)_{m,j} (I^n_d \rho)_i\right] dt.\label{crochets_3}
\end{align}
We notice that only the $n$ first components of~$\rho$ matter, and we can assume without loss of generality that $\rho_{n+1}=\dots=\rho_d=0$.

From~\eqref{SDE_General}, we easily get
$$e^{\kappa t}Y_t=y + \int_0^t e^{\kappa s} \kappa \theta ds  + \int_0^t e^{\kappa s} c\sqrt{X_s}\left[\bar\rho dZ_s + dW_s\rho \right]. $$
Therefore, the process $Y$ is uniquely determined once the processes $Z$, $W$ and $X$ are given. We know by Cuchiero et al.~\cite{CFMT2011} that the SDE on~$X$ has a unique weak solution when $x \in\posm$ and $\Omega \in \posm$, and a unique strong solution if we assume besides that $x$ is invertible and  $\Omega- 2\epsilon^2I^n_d \in \posm$. This leads to the following result.

\begin{proposition}\label{existence}
If $x\in\posm$, $\Omega\in \posm$  there exists a unique weak solution of the SDE~\eqref{SDE_General}. If we assume moreover that $\Omega- 2\epsilon^2I^n_d \in \posm$ and $x\in\posm$ is positive definite, there is a unique strong solution to the SDE (\ref{SDE_General}).
\end{proposition}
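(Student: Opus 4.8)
The claim is, up to a change of parametrization, a direct restatement of the existence and uniqueness theory for $\posm$-valued affine diffusions developed in~\cite{CFMT2011}; all the analytic work is outsourced there, and the plan is simply to match the coefficients of~\eqref{SDE_General} with that framework and to translate its admissibility and non-degeneracy hypotheses into the stated conditions on $x$ and $\Omega$.

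First I would read off the data of the equation for $X$. Since $I^n_d$ is symmetric and idempotent, the quadratic covariation~\eqref{crochets_2} is exactly the affine diffusion structure on $\posm$ with diffusion matrix $Q:=\epsilon I^n_d$, the relevant parameter of~\cite{CFMT2011} being $\alpha:=Q^\top Q=\epsilon^2 I^n_d$; the drift is the affine field $X\mapsto\tilde\Omega+bX+Xb^\top$, with constant part $\tilde\Omega:=\Omega+(d-1)\epsilon^2 I^n_d$ and linear part given by the unrestricted matrix $b\in\genm$. The admissibility condition of~\cite{CFMT2011} on the constant drift, which for our datum requires $\tilde\Omega\succeq(d-1)\alpha$, then holds if and only if $\Omega\in\posm$, since $\tilde\Omega-(d-1)\alpha=\Omega$ (and in that case $\tilde\Omega\in\posm$ automatically). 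Thus, for $x\in\posm$ and $\Omega\in\posm$, the associated martingale problem on $\posm$ is well posed, which yields a unique weak solution of~\eqref{SDE_General}.

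For the second assertion I would invoke the boundary-nonattainment part of the same theory: when $x$ is positive definite and the constant drift dominates $(d+1)\alpha$, the solution never leaves the open cone $\dpos$, on which $X\mapsto\sqrt X$ is smooth, so a localization argument gives pathwise uniqueness and, combined with weak existence via Yamada--Watanabe, a unique strong solution. Here $\tilde\Omega-(d+1)\alpha=\Omega+(d-1)\epsilon^2 I^n_d-(d+1)\epsilon^2 I^n_d=\Omega-2\epsilon^2 I^n_d$, so the condition $\tilde\Omega\succeq(d+1)\alpha$ is precisely $\Omega-2\epsilon^2 I^n_d\in\posm$, i.e.\ the additional hypothesis in the statement. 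Nothing further is needed for $Y$: once $Z$, $W$ and $X$ are given, the variation-of-constants identity for $e^{\kappa t}Y_t$ displayed just before the statement determines $Y$ uniquely and explicitly.

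I expect the only point requiring care to be that the diffusion matrix $Q=\epsilon I^n_d$ is rank-deficient when $n<d$ and vanishes entirely when $\epsilon=0$ or $n=0$, so the results of~\cite{CFMT2011} must be quoted in the version allowing a degenerate diffusion matrix rather than the classical non-degenerate Wishart case. This is not genuinely an obstacle here, since the extra drift term $(d-1)\epsilon^2 I^n_d$ has been inserted exactly so that the admissibility inequalities above hold and keep $X$ in the cone; and in the degenerate limit $\epsilon=0$ equation~\eqref{SDE_General} reduces to the linear matrix ODE $\dot X_t=\Omega+bX_t+X_tb^\top$, for which the statement is immediate.
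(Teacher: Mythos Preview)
Your proposal is correct and follows exactly the paper's approach: the paper's own ``proof'' is simply the sentence preceding the proposition, which cites Cuchiero et al.~\cite{CFMT2011} for weak and strong well-posedness of the $X$-equation and notes that $Y$ is then determined by the variation-of-constants formula. You have merely fleshed out how the admissibility and boundary-nonattainment conditions of~\cite{CFMT2011} translate into $\Omega\in\posm$ and $\Omega-2\epsilon^2 I^n_d\in\posm$ respectively, which is exactly the bookkeeping the paper leaves implicit.
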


\noindent The affine structure of the process $(X,Y)$ allows us to give formulas for the Laplace transform of the marginal laws by means of  Matrix Riccati Differential Equations (MRDE). Similar calculations have been made in equity modelling by Da Fonseca et al.~\cite{F} or Benabid et al.~\cite{Benabid}.    The following proposition states the precise result, which is useful for the pricing of Zero-Coupon bonds.

\begin{proposition}\label{FourLapGeneralProp}
Let $\Lambda,\bar{\Lambda} \in \R^p$, $\Gamma, \bar \Gamma \in \symm$. For $t\ge 0$, we define $\lambda(t) \in \R^p$ by
\begin{equation}\label{val_lambda}\lambda_i(t)=\Lambda_i e^{- \kappa_i t} + \frac{  \bar\Lambda_i}{\kappa_i} (1- e^{- \kappa_i t} ).
\end{equation}
Let us assume that there exists $\Upsilon \in \symm$ such that
\begin{align}
&\Upsilon-\Gamma \in \posm, \label{cond1_ups} \\ &\forall t\ge 0,- \left[ 2\epsilon^2 \Upsilon I^n_d\Upsilon + \Upsilon(b+ \frac{\epsilon}{2} I^n_d\rho \lambda^{\top} c) +(b+\frac{\epsilon}{2} I^n_d\rho \lambda^{\top} c)^{\top} \Upsilon + \frac{1}{2}c^{\top} \lambda\lambda^{\top} c+\bar\Gamma \right] \in \posm \label{cond2_ups}
\end{align}
Then, the following system of differential equations
\begin{equation}
\label{LaplaceGenRiccati}
\left\{
\begin{array}{l}
\dot\fctgm=2\epsilon^2\fctgm I^n_d\fctgm+\fctgm (b+ \frac{\epsilon}{2} I^n_d\rho \lambda^{\top} c)+(b+\frac{\epsilon}{2} I^n_d\rho \lambda^{\top} c)^{\top} \fctgm+\frac{1}{2}c^{\top} \lambda\lambda^{\top} c+\bar\Gamma, \ \fctgm(0)=\Gamma,\\
\\
\dot\eta=\lambda^{\top} \kappa\theta+\mathrm{Tr}\left(\fctgm(\Omega+\epsilon^2(d-1)I^n_d)\right),\quad \eta(0)=0,
\end{array}
\right.
\end{equation}
has a unique solution, which is defined on $\R_+$. It satisfies $\Upsilon -g(t) \in \posm$ for any $t\ge 0$.  Besides, we have for any $0\le t \le T$:
\begin{equation}\label{laplace}
    \E \left[\exp\left(\Tr(\Gamma X_T)+\Lambda^{\top} Y_T+\int_t^T  \Tr\left(\bar\Gamma X_s\right)+\bar\Lambda^{\top} Y_sds \right)\bigg|\mathcal{F}_t\right]=\exp(\eta(T-t)+\Tr(\fctgm(T-t)X_t)+\lambda(T-t)^{\top} Y_t).
\end{equation}
\end{proposition}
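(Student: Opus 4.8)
The plan is to verify the claimed formula \eqref{laplace} by the standard martingale argument for affine processes: guess that the conditional expectation has the exponential-affine form on the right-hand side, differentiate, and check that the drift vanishes precisely when $g$ and $\eta$ solve the Riccati system \eqref{LaplaceGenRiccati}. First I would reduce to the case $t=0$ by the Markov property of $(X,Y)$ (which follows from the weak existence and uniqueness stated in Proposition \ref{existence}) and by time-homogeneity of the SDEs. Then, assuming for the moment that a solution $(g,\eta)$ to \eqref{LaplaceGenRiccati} exists on $[0,T]$, I would introduce the candidate process
\[
M_s = \exp\!\Big(\eta(T-s)+\Tr\big(g(T-s)X_s\big)+\lambda(T-s)^{\top}Y_s+\int_0^s \Tr(\bar\Gamma X_u)+\bar\Lambda^{\top}Y_u\,du\Big),\quad 0\le s\le T,
\]
and apply It\^o's formula. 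Using the dynamics \eqref{SDE_GeneralY}--\eqref{SDE_General} and the explicit quadratic covariations \eqref{crochets_1}--\eqref{crochets_3}, the finite-variation part of $dM_s/M_s$ is an affine function of $(X_s,Y_s)$: the coefficient of $Y_s$ cancels because $\lambda$ solves the linear ODE $\dot\lambda = -\kappa^{\top}\lambda + \bar\Lambda$ implied by \eqref{val_lambda} together with the $\bar\Lambda^{\top}Y_s$ term; the coefficient multiplying $X_s$ (after taking traces, and after collecting the Ito second-order terms $\epsilon^2 g I^n_d g$ coming from \eqref{crochets_2}, the cross term $\epsilon\, g I^n_d\rho\lambda^{\top}c$ coming from \eqref{crochets_3}, the $b X_s + X_s b^{\top}$ drift, and the $\tfrac12 c^\top\lambda\lambda^\top c$ term from \eqref{crochets_1}) cancels exactly by the matrix Riccati equation for $g$; and the constant term cancels by the ODE for $\eta$, which absorbs the $\lambda^\top\kappa\theta$ drift and the $\Tr(g(\Omega+\epsilon^2(d-1)I^n_d))$ contribution of the constant part of the drift of $X$. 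Hence $M$ is a local martingale.

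The two genuinely delicate points are (i) global existence of the solution to the Riccati system and (ii) upgrading "local martingale'' to "true martingale'' so that $\E[M_T]=M_0$ gives \eqref{laplace}. For (i), the equation for $g$ is a matrix Riccati ODE with time-dependent (but smooth and bounded on compacts) coefficients through $\lambda(t)$; local existence and uniqueness are immediate from Cauchy--Lipschitz, and the role of the assumptions \eqref{cond1_ups}--\eqref{cond2_ups} is to prevent blow-up. The plan here is a comparison/invariance argument: show that the set $\{M\in\symm : \Upsilon - M \in \posm\}$ is forward-invariant for the flow of the $g$-equation. Concretely, if at some time $g(t)$ touches the boundary, i.e. $\Upsilon-g(t)$ is singular with kernel vector $v$, one computes $v^\top \dot g(t) v$ and, using $g(t)v=\Upsilon v$ on that vector, rewrites it as $v^\top$ times the bracket appearing in \eqref{cond2_ups} (plus a term $2\epsilon^2 v^\top(\Upsilon - g I^n_d g)v$-type correction that one checks has the right sign); condition \eqref{cond2_ups} is exactly what makes $v^\top\dot g(t)v\le 0$, so $\Upsilon - g$ cannot become indefinite, and since it stays in a compact set $g$ is bounded and therefore global. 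This invariance also delivers the asserted $\Upsilon - g(t)\in\posm$ for all $t$. I expect this step — setting up the right invariant domain and getting the boundary inequality to come out with the correct sign from \eqref{cond2_ups} — to be the main obstacle; it is the place where the precise form of the hypotheses is used, and it parallels the domain-invariance arguments for matrix Riccati equations in Da Fonseca et al.~\cite{F} and in the affine literature (e.g.\ \cite{CFMT2011}).

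For (ii), once $g$ is bounded on $[0,T]$ and $\Upsilon-g(t)\succeq 0$, the exponent $\Tr(g(T-s)X_s)+\lambda(T-s)^\top Y_s$ is controlled: the bound $g(T-s)\preceq \Upsilon$ gives an a priori upper bound $\Tr(g(T-s)X_s)\le \Tr(\Upsilon X_s)$ since $X_s\succeq 0$, and standard moment estimates for the affine process $X$ (again available from \cite{CFMT2011}) together with the conditional Gaussianity of $Y$ given the path of $X$ yield that $\sup_{s\le T}\E[M_s]<\infty$ and in fact $\E[\sup_{s\le T}M_s^{1+\delta}]<\infty$ for some $\delta>0$, or at least a uniform-integrability bound; hence the local martingale $M$ is a martingale. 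Taking $s=0$ and $s=T$ and conditioning on $\mathcal F_t$ in the time-shifted version then gives \eqref{laplace}. Uniqueness of the solution of \eqref{LaplaceGenRiccati} on $\R_+$ follows from the local uniqueness plus the non-blow-up just established. I would also remark that, formally, \eqref{laplace} on $[t,T]$ follows from the $t=0$ case applied to the shifted process started from $(X_t,Y_t)$, which is legitimate by the Markov property.
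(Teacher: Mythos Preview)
Your derivation of the Riccati system via It\^o's formula and your invariance argument for global existence of $g$ are essentially what the paper does (the paper offloads the invariance step to Proposition~1.1 of Dieci--Eirola~\cite{DieciEirola}, but your boundary computation is the same idea; in fact on a kernel vector $v$ of $\Upsilon-g(t)$ one has $gv=\Upsilon v$ and, since $g$ is symmetric, also $v^\top g=v^\top\Upsilon$, so $v^\top\dot g\,v$ reduces \emph{exactly} to $v^\top[\,2\epsilon^2\Upsilon I^n_d\Upsilon+\Upsilon B+B^\top\Upsilon+\tfrac12 c^\top\lambda\lambda^\top c+\bar\Gamma\,]v\le 0$ by \eqref{cond2_ups}, with no leftover correction term).

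The genuine discrepancy is in step~(ii). Your plan to show $M$ is a true martingale by proving $\E[\sup_{s\le T}M_s^{1+\delta}]<\infty$ via ``standard moment estimates'' is not justified and is in danger of being circular: $M_s$ contains $\exp(\Tr(g(T-s)X_s)+\lambda(T-s)^\top Y_s)$, and finiteness of such exponential moments of $(X,Y)$ is precisely the content of the proposition you are proving; the bound $g\preceq\Upsilon$ only reduces this to finiteness of $\E[\exp((1+\delta)\Tr(\Upsilon X_s))]$, which you have no reason to know a priori for the given $\Upsilon$. The paper avoids this by using Rydberg's argument~\cite{Rydberg}: since $M$ is a positive local martingale it is a supermartingale, giving $\E[M_T]\le M_0$; for the reverse inequality one localizes with $\tau_{\bf K}=\inf\{t:\Tr(X_t)\ge{\bf K}\}$, builds a genuinely bounded-coefficient exponential martingale $N^{({\bf K})}$ that agrees with $M/M_0$ up to $\tau_{\bf K}$, changes measure under $N^{({\bf K})}$, and observes that under the new measure $X$ solves another affine SDE on $\posm$ (hence non-explosive), so $\Px^{({\bf K})}(\tau_{\bf K}>T)\to 1$. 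This change-of-measure/non-explosion device is the missing idea in your proposal; it replaces the uniform-integrability bound you were hoping for and works without any extra moment assumptions.
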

\begin{proof}
The proof is quite standard for affine diffusion. First, we notice that if~\eqref{laplace} holds, we necessarily have that $M_t= \exp\left( \int_0^t  \Tr\left(\bar\Gamma X_s\right)+\bar\Lambda^{\top} Y_s ds \right) \exp(\eta(T-t)+\Tr(\fctgm(T-t)X_t)+\lambda(T-t)^{\top} Y_t)$ is a martingale. We apply It\^o's formula and use~\eqref{crochets_1},~\eqref{crochets_2} and~\eqref{crochets_3}. The martingale property yields to
\begin{align*}
&\bar\Gamma X_t+\bar\Lambda^\top Y_t-\dot\eta(T-t)-\Tr(\dot\fctgm(T-t)X_t)-\dot\lambda(T-t)^{\top} Y_t + \Tr(\fctgm(T-t)[\Omega +(d-1)\epsilon^2I^n_d  + bX_s+X_sb^{\top} ])\\
& +\lambda(T-t)^{\top} \kappa(\theta-Y_t) + 2\epsilon^2 \Tr(X \fctgm(T-t) I^n_d\fctgm(T-t))+\frac{1}{2}\Tr(X c^\top \lambda(T-t)  \lambda^\top(T-t) c) \\&+\frac{\epsilon}{2} \Tr(X[c^\top \lambda(T-t) \rho^\top I^n_d \fctgm(T-t)+\fctgm(T-t)I^n_d \rho  \lambda^\top(T-t)c ]) =0.
\end{align*}
By identifying the constant term and the linear terms with respect to $Y_t$ and $X_t$, we get~\eqref{LaplaceGenRiccati} and $ \dot\lambda= -\kappa \lambda+\bar\Lambda, \ \lambda(0)=\Lambda$, which leads to~\eqref{val_lambda} since $\kappa$ is diagonal with positive entries.
By applying Proposition~1.1 of Dieci and Eirola\footnote{We thank Martino Grasselli for pointing us this reference.}~\cite{DieciEirola} to $\Upsilon-\fctgm$, the solution of~\eqref{LaplaceGenRiccati} exists and is well defined for $t\ge 0$. Besides, $\Upsilon-\fctgm$ stays in~$\posm$ by using~\eqref{cond1_ups} and~\eqref{cond2_ups}.

Then, it remains to check that we have indeed~\eqref{laplace}, and it is sufficient to check it for $t=0$.  To do so, we apply It\^{o}'s formula to $M$ and get
$$ dM_s=M_s\left[\Tr(\fctgm(T-s)  [\sqrt{X_s}dW_sI^n_d + I^n_ddW_s^{\top} \sqrt{X_s}] ) + \lambda(T-s)^{\top}c\sqrt{X_s}\left[\bar\rho dZ_s + dW_s\rho \right] \right].$$
Thus, $M$ is a positive local martingale and thus a supermartingale, which gives $M_0\ge \E[M_T]$. To prove that $M_0=\E[M_T]$, we use the argument presented by Rydberg~\cite{Rydberg}. We define $N_t=M_t/M_0$ in order to work with probability measures. We define for ${\bf K}>0$, $\tau_{\bf K}=\inf \{t\ge 0, \Tr(X_t)\ge {\bf K} \}$, $\pi_{\bf K}(x)=\indi{\Tr(x)\le {\bf K}}x+\indi{\Tr(x)\ge {\bf K}} \frac{{\bf K}}{\Tr(x)} x $ for $x \in \posm$ and consider $N^{({\bf K})}_t$ the solution of
  \begin{align*}
 dN^{({\bf K})}_s=&N^{({\bf K})}_s\big\{\Tr(\fctgm(T-s)  [\sqrt{\pi_{\bf K}(X_s)}dW_sI^n_d + I^n_ddW_s^{\top} \sqrt{\pi_{\bf K}(X_s)}] ) \\& + \lambda(T-s)^{\top}c\sqrt{\pi_{\bf K}(X_s)}\left[\bar\rho dZ_s + dW_s\rho \right] \big\}, \\
 N^{({\bf K})}_0=&1.
  \end{align*}
Clearly, $\E[N^{({\bf K})}_T]=1$, and under $\frac{d \Px^{({\bf K})}}{d \Px}=N^{({\bf K})}_T$, $$dW^{({\bf K})}_t=dW_t-2  \sqrt{\pi_{\bf K}(X_t)}\fctgm(T-t)I^n_d-  \sqrt{ \pi_{\bf K}(X_t)}c^\top\lambda(T-t)\rho^\top$$ is a matrix Brownian motion under~$\Px^{({\bf K})}$. 

We now write $\E[N_T]=\E[N_T \indi{\tau_{\bf K}> T}]+\E[N_T \indi{\tau_{\bf K} \le T}]$. By the dominated convergence theorem, we have $\E[N_T \indi{\tau_{\bf K}\le T}] \underset{{\bf K}\rightarrow + \infty}{\rightarrow} 0$. Besides, $\E[N_T \indi{\tau_{\bf K}> T}]=\E[N^{({\bf K})}_T \indi{\tau_{\bf K}> T}]=\Px^{({\bf K})}(\tau_{\bf K}>T)$, and we have to prove that this probability goes to~$1$. To do so, we focus on the following SDE
\begin{align*}d\tilde{X}_t =& (\Omega+(d-1)\epsilon^2I^n_d+ (b+2\epsilon I^n_d\fctgm(T-t)+ \epsilon I^n_d \rho \lambda^\top(T-t)c ) \tilde{X}_t +\tilde{X}_t(b^{\top}+2\epsilon \fctgm(T-t) I^n_d+\epsilon c^\top \lambda(T-t) \rho^\top I^n_d) )dt \\
&+ \epsilon\left(\sqrt{\tilde{X}_t}dW_tI^n_d+I^n_ddW_t^{\top} \sqrt{\tilde{X}_t}\right)
\end{align*}
starting from $\tilde{X}_0=X_0$. We check that $X$ solves before $\tau_{\bf K}$ and under~$\Px^{({\bf K})}$ the same SDE as $\tilde{X}$ under $\Px$. This yields to
$\Px^{({\bf K})}(\tau_{\bf K} > T)=\Px(\inf \{t\ge 0, \Tr(\tilde{X}_t)\ge {\bf K} \} > T)$. Since the SDE satisfied by $\tilde{X}$ is the one of an affine diffusion on~$\posm$, it is well defined for any $t\ge 0$. In particular $\max_{t\in[0,T]} \Tr(\tilde{X}_t)<\infty$ a.s., which gives $\Px(\inf \{t\ge 0, \Tr(\tilde{X}_t)\ge {\bf K} \}>T)\underset{{\bf K}\rightarrow + \infty}{\rightarrow} 1$.
\end{proof}
\begin{remark}\label{rk_cond_lap} The conditions~\eqref{cond1_ups} and~\eqref{cond2_ups} are satisfied for $\Upsilon=0$ if, and only if $-\Gamma \in \posm$ and \begin{equation}\label{cond_ups0}
\forall t\ge 0, \ - \bar \Gamma - \frac{1}{2}c^{\top} \lambda(t) \lambda^{\top}(t) c \in \posm .
  \end{equation}
   Since $|\lambda_i(t)|\le \max(|\Lambda_i|,|\bar \Lambda_i/\kappa_i|)$, we obtain $\lambda(t)^\top\lambda(t)\le  \sum_{i=1}^p\max(\Lambda_i^2,(\bar \Lambda_i/\kappa_i)^2)$. We therefore have $\sum_{i=1}^p\max(\Lambda_i^2,(\bar \Lambda_i/\kappa_i)^2) I_d- \lambda(t)^\top\lambda(t) \in \posm$ and then $\sum_{i=1}^p\max(\Lambda_i^2,(\bar \Lambda_i/\kappa_i)^2) c^{\top}c - c^{\top} \lambda(t) \lambda^{\top}(t) c \in \posm$. Therefore, a sufficient condition for~\eqref{cond_ups0} is
\begin{equation*}
- \bar \Gamma- \frac{1}{2}  \sum_{i=1}^p\max(\Lambda_i^2,(\bar \Lambda_i/\kappa_i)^2) c^\top c \in \posm .
\end{equation*}
\end{remark}
With the Laplace transform~\eqref{laplace}, we have a mathematical tool to check if the process $(X,Y)$ is stationary. This is important for our modeling perspective: unless for some transitory period, one may expect that the factors are stable around some equilibrium. The next proposition give a simple sufficient condition that ensures stationarity. It is proved in Appendix~\ref{ErgoExploSec}.
\begin{proposition}\label{prop_ergo} If $-(b+b^{\top}) \in \posm$ is positive definite, the process $(X,Y)$ is stationary.
\end{proposition}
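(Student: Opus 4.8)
The plan is to exhibit an invariant probability measure $\mu$ on $\posm\times\R^p$ for the Markov process $(X,Y)$: the solution of \eqref{SDE_GeneralY}--\eqref{SDE_General} started from $(X_0,Y_0)\sim\mu$ is then stationary, which is the assertion. Since the coefficients are continuous, the martingale problem is well posed (Proposition~\ref{existence}, \cite{CFMT2011}) and the semigroup is Feller; hence, by a Krylov--Bogolyubov argument, it is enough to prove that the family of laws $\{\mathrm{Law}(X_t,Y_t)\}_{t\ge 0}$ (for an arbitrary fixed deterministic initial condition $(x,y)$) is tight.

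The key step is a moment bound, and this is where the hypothesis enters. Set $\alpha:=\lambda_{\min}\big(-(b+b^{\top})\big)>0$. Applying the generator to the function $\Tr(x)$ one gets, from the drift of \eqref{SDE_General},
\begin{equation*}
\mathcal{L}\,\Tr(X)=\Tr(\Omega)+(d-1)\epsilon^2 n+\Tr\!\big(X(b+b^{\top})\big)\le \big(\Tr(\Omega)+(d-1)\epsilon^2 n\big)-\alpha\,\Tr(X),
\end{equation*}
so a standard localisation and Gronwall argument yields $\sup_{t\ge 0}\E_{(x,y)}[\Tr(X_t)]<\infty$; by Markov's inequality and the compactness of $\{x\in\posm:\Tr(x)\le R\}$, the family $\{\mathrm{Law}(X_t)\}_{t\ge 0}$ is tight. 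Then, using the variation-of-constants representation of $Y$ written right after \eqref{SDE_General} together with It\^o's isometry, the bound just obtained and $\kappa_i>0$, one obtains $\sup_{t\ge 0}\E_{(x,y)}[|Y_t|^2]<\infty$. These two bounds make $\{\mathrm{Law}(X_t,Y_t)\}_{t\ge 0}$ tight on $\posm\times\R^p$, which by the first paragraph finishes the proof.

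One can, moreover, show that $\mathrm{Law}(X_t,Y_t)$ actually converges (to $\mu$, which is then the unique invariant law), by analysing the Laplace transform of Proposition~\ref{FourLapGeneralProp}. For the $X$-component this is transparent: the relevant Riccati equation is \emph{autonomous}, and with $\bar\Lambda=\Lambda=\bar\Gamma=0$ and $-\Gamma\in\posm$ the conditions \eqref{cond1_ups}--\eqref{cond2_ups} hold with $\Upsilon=0$, so $g$ is global with $-g(t)\in\posm$ and $\dot g=2\epsilon^2 gI^n_dg+gb+b^{\top}g$; then
\begin{equation*}
\tfrac{d}{dt}\Tr\!\big(g(t)^2\big)=4\epsilon^2\Tr\!\big(g(t)^3 I^n_d\big)+2\Tr\!\big(g(t)^2(b+b^{\top})\big)\le -2\alpha\,\Tr\!\big(g(t)^2\big),
\end{equation*}
because $-g(t)\in\posm$ forces $\Tr(g(t)^3 I^n_d)\le 0$ while $\Tr(g(t)^2(b+b^{\top}))\le-\alpha\Tr(g(t)^2)$. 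Hence $g(t)\to 0$ exponentially, $\eta(t)\to\eta_\infty$ and $\Tr(g(t)x)\to 0$, so $\E_x[\exp(\Tr(\Gamma X_t))]\to e^{\eta_\infty}$, independently of $x$. For the joint law one proceeds similarly for $(\Gamma,\Lambda)$ in a small neighbourhood of the origin: there $\lambda(t)\to 0$ exponentially, \eqref{cond1_ups}--\eqref{cond2_ups} hold with $\Upsilon=sI_d$ for a suitably small $s$, and the Lyapunov estimate above persists after perturbation.

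The bare statement of the proposition is thus obtained by routine arguments (Feller property, Krylov--Bogolyubov, the moment bound), and the only point at which the assumption matters is to make both $\mathcal{L}\,\Tr(X)$ and the drift operator $g\mapsto gb+b^{\top}g$ strictly dissipative. The genuinely delicate part lies in the stronger convergence statement: once $\Lambda\neq 0$, $g(t)$ need no longer be sign-definite along the flow, so the cubic ``vol-of-vol'' term $4\epsilon^2\Tr(g^3 I^n_d)$ is not automatically favourable and must be absorbed, using the a priori bound $g(t)\preceq sI_d$ provided by the comparison in Proposition~\ref{FourLapGeneralProp} together with the exponential decay of the forcing $\tfrac12 c^{\top}\lambda(t)\lambda(t)^{\top}c$ and of the extra drift $\tfrac{\epsilon}{2}I^n_d\rho\lambda(t)^{\top}c$.
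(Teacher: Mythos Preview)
Your proof is correct, but its architecture differs from the paper's. The paper does not invoke Krylov--Bogolyubov or moment bounds at all: it works exclusively with the Laplace transform of Proposition~\ref{FourLapGeneralProp}, taking $\bar\Lambda=\bar\Gamma=0$, $\Upsilon=\tfrac{\mu}{4\epsilon^2}I_d$ (where $2\mu I_d\le-(b+b^\top)$), and showing that for all $(\Gamma,\Lambda)$ in a neighbourhood of the origin the Riccati solution satisfies $g(t)\le\tfrac{\mu}{4\epsilon^2}I_d$. This a priori upper bound is precisely what absorbs the cubic term in the estimate
\[
\tfrac12\tfrac{d}{dt}\Tr(g^2)\le \tfrac{\mu}{2}\Tr(gI^n_dg)-\mu\Tr(g^2)+\tfrac{\mu}{4\epsilon^2}\Tr\bigl(\tfrac12 c^\top\lambda\lambda^\top c\bigr),
\]
yielding $g(t)\to 0$ exponentially, convergence of $\eta(t)$, and hence convergence in law from \emph{every} initial point by L\'evy's theorem. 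So the paper actually proves the stronger statement of convergence (and implicitly uniqueness of the invariant law), whereas your tightness argument via $\mathcal L\,\Tr(X)\le C-\alpha\Tr(X)$ and the variation-of-constants bound on $\E|Y_t|^2$ gives only existence of an invariant measure --- enough for the proposition as literally stated, and pleasantly elementary. Your last two paragraphs, where you do sketch the convergence, are essentially the paper's argument; two minor remarks there: the correct scale for $\Upsilon=sI_d$ is $s\sim\mu/\epsilon^2$ (so not ``suitably small'' when $\epsilon$ is small --- what matters is $2\epsilon^2 s<\mu$ together with $\|\Lambda\|$ small), and your shortcut $\Tr(g^3I^n_d)\le 0$ in the $\Lambda=0$ case, exploiting $-g\in\posm$, is a nice simplification that the paper does not make since it handles general $(\Gamma,\Lambda)$ in one stroke.
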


\begin{remark}
We chose to keep the dynamics of the process $X$ in the space of positive semidefinite matrices as general as possible. Choosing a Wishart specification for $X$ (which corresponds to $\Omega=\epsilon^2\alpha I^n_d, \alpha>0$) does not lead to a significant simplification of the model. While Wishart processes admit an explicit Lapace transform, this is not the case for the process $(X,Y)$ defined by (\ref{SDE_General}). The drift term $\Omega$ allows to account for a mean reversion behavior of the process $X$, we will typically consider a negative mean reversion matrix $b$, in which case we can set $\Omega=-bx^\infty-x^{\infty}b^{\top} $, so that the matrix process $X$ mean reverts to $x^\infty$.
\end{remark}

\subsection{Model definition}\label{ModelDefSec}

\begin{definition}\label{ModelDef}
We assume that $(X_t,Y_t)_{t\geq 0}$ follows~\eqref{SDE_GeneralY} and~\eqref{SDE_General} under a risk-neutral measure. Then, we define the short interest rate by
\begin{equation}\label{ShortRateEq}
    r_t=\varphi+\sum_{i=1}^p Y^i_t+\Tr\left(\gamma X_t\right),
\end{equation}
\noindent with $\varphi \in \R$ and $\gamma \in \symm$.\\
\end{definition}

\noindent  From Proposition~\ref{FourLapGeneralProp}, we easily get the following result on the Zero-Coupon bonds.
\begin{nt_corollaire}\label{Cor_ZCB}\textbf{Bond reconstruction formula}. Let $0\le t \le T$ and $P_{t,T}=\E[\exp(-\int_t^T r_sds)|\mathcal{F}_t]$ denote the price at time~$t$ of a zero-coupon bond with maturity~$T$. Let us assume that
\begin{equation}\label{cond_gamma} \gamma -\frac{1}{2} \left( \sum_{i=1}^p \frac{1}{\kappa_i^2} \right) c^\top c \in \posm.
\end{equation}
Then, by using Remark~\ref{rk_cond_lap}, $P_{t,T}$ is given by
\begin{equation}\label{zc}
    P_{t,T}=\exp(A(T-t)+\Tr(D(T-t)X_t)+B(T-t)^{\top} Y_t),
\end{equation}
with $A(t)=\eta(t)-\varphi t$, $D(t)=g(t)$ and $B(t)=\lambda(t)$, where $(\eta,g,\lambda)$ is the solution of~\eqref{LaplaceGenRiccati} with~\eqref{val_lambda}, $\Lambda=0$, $\Gamma=0$, $\bar \Gamma= -\gamma$ and $\bar \Lambda =-\mathbf{1}_p$ (i.e. $\bar \Lambda_i=-1$ for $1\le i\le p$). In particular, we have $-D(T-t) \in \posm$.
\end{nt_corollaire}

\noindent Let us make general comments on the model.  It is close but slightly different from the model proposed in the PhD Thesis of Bensusan~\cite{HB}. Nonetheless, our presentation as a perturbation of the LGM model enables us to have a better understanding of the model parameters. Thus, the vector process $Y$ can be interpreted as in the LGM model, meaning it is assumed to be the main driver of the yield curve. The individual factors are viewed as principal components movements of the yield curve. We chose to specify the model such that the matrix process $X$ admits a similar interpretation. Typically we will consider  $\Omega=-bx^\infty-x^{\infty}b^{\top} $ with $b$ symmetric negative to have a mean-reversion toward a given covariance matrix $x^\infty$. The parameter $\epsilon$ measures the level of the perturbation around the LGM. The matrix process $X$ plays the role of a stochastic variance-covariance matrix of the main movements of the yield curve. It is possible to define the diffusion parameter $c$ such that the diagonal factors of the matrix $X$ play the role of the instantaneous stochastic variance of the yield curve movement and the off-diagonal terms play the role of the instantaneous covariance between two yield curve movements. The vector $\rho$ is a correlation parameter between the processes $Y$ and $X$. In a first approximation\footnote{Note that this is not completely true, even in the simple LGM model. One important characteristic of short rate/factorial interest rates model is that the yield curve depends not only on the (stochastic) state variables of the model, but also on the volatility of the state variables. Therefore the volatility factors $X$ appear in the payoff of interest rates options.}, interest rates options are options on linear combinations of the factors $Y$, and instantaneous variance of these linear combinations are linear combinations of the factors $X$. Therefore, the correlation parameter $\rho$ will  drive the skew of interest rates options. We now make more precise comments on the model.

\begin{itemize}
\item In order to keep the same factors as in the LGM, one would like to take $\gamma=0$. However, this choice is possible only if the perturbation around the LGM is small enough provided that $-(b+b^\top)$ is positive definite, see Remark~\ref{rk_cond_lap_2}. Besides, even if $P_{t,T}$ may be well defined for $T-t$ small enough, it would be then given by the same formula, and therefore the yield curve dynamics depends anyway on the factor~$X$.
\item In order to have a clear interpretation of the volatility factor~$X$ on the factor $Y$, a possible choice is to consider $d=q \times p$ with $q \in \mathbb{N}^*$ and $c_{i,j}=\indi{(i-1) \times p < j \le i\times p}$. Thus, from~\eqref{crochets_1}, the principal matrix $(X_{k,l})_{ (i-1) \times p <k,l \le i\times p}$ rules the instantaneous quadratic variation of the factor $Y_i$ while the submatrix  $(X_{k,l})_{ (i-1) \times p <k \le i\times p,(j-1) \times p <l \le j\times p }$ rules the instantaneous covariation between the factors $Y^i$ and $Y^j$.
\item The model does not prevent from having a negative short rate or from having $\E[|P_{t,T}|^k]=\infty $ for any $k>0$, unless we consider the degenerated case ($p=0$) where the yield curve is driven by the volatility factors $X$ and the factors $Y$ are null. This particular model has been studied by Gnoatto in~\cite{Gnoatto}.
\item  Affine Term Structure models generally consider constant parameters that are fixed over a large period and reflect the market behaviour, while the current value of factors are fitted to market data. This is why we consider constant parameters here. However, in order to fit exactly Zero-Coupon Bond prices, it is possible to take a time-dependent function $\varphi$ while keeping the tractability of the model.
\end{itemize}

\begin{remark}\label{rk_cond_lap_2}
The condition~\eqref{cond_gamma} is sufficient to get that~$P_{t,T}$ is well-defined. However, this condition does not depend on $\epsilon$ while we know that for $\epsilon=0$, $P_{t,T}$ is well-defined since $X$ is deterministic and $Y$ is a Gaussian process. We can get a complementary sufficient condition when $-(b+b^\top)$ is positive definite, which is a reasonable assumption since it leads to a stationary process by Proposition~\ref{prop_ergo}. In this case, there exists $\mu>0$ such that $-(b+b^\top)-\mu I_d \in \posm$. By using Proposition~\ref{FourLapGeneralProp} with $\Upsilon=\frac{\mu}{4\epsilon^2}I_d$,  we get that~\eqref{cond2_ups} is satisfied if we have
$$\forall t\ge 0,  \frac{\mu^2}{8\epsilon^2}I_d- \frac{\mu}{8\epsilon}(I^n_d\rho \lambda^{\top} c+I^n_d\rho \lambda^{\top} c)-\frac{1}{2}c^{\top} \lambda\lambda^{\top} c+\gamma \in \posm. $$
Since for $t\ge0$, $\lambda(t)$ takes values in a compact subset of~$\R^p$, there is $\epsilon_0>0$ such that this condition is satisfied for any $\epsilon \in (0,\epsilon_0)$. 
\end{remark}

\begin{remark} Let $a \in \genm$, and consider the model $r_t=\varphi+\sum_{i=1}^p Y^i_t+\Tr\left( \tilde{\gamma} \tilde{ X_t}\right)$ with
 \begin{align*}
Y_t &=y +\int_0^t\kappa(\theta -Y_s) ds + \int_0^t \tilde{c}\sqrt{\tilde{X}_s}\left[\sqrt{1-|\tilde{\rho}|^2} dZ_s + dW_s \tilde{\rho} \right]\\
\tilde{X}_t &=\tilde{x} +\int_0^t\left( \tilde{\Omega} +(d-1)\epsilon^2a^\top a  + \tilde{b} \tilde{X}_s+\tilde{X}_s \tilde{b}^{\top} ) \right )ds + \epsilon\int_0^t\sqrt{\tilde{X}_s}dW_sa + a^\top dW_s^{\top} \sqrt{\tilde{X}_s},
 \end{align*}
and $\tilde{\gamma} \in \symm$, $\tilde{x},\tilde{\Omega} \in \posm$, $\tilde{c},\tilde{b} \in \genm$, $\tilde{\rho}\in \R^d$ such that $|\tilde{\rho}|\le 1$. This model may seem a priori more general, but this is not the case. In fact, let $n$ be the rank of~$a$ and $u\in \genm$ be an invertible matrix such that $a^\top a= (u^{-1})^\top I^n_d (u^{-1})$.  Then, $X_t=u^\top \tilde{X}_t u$ solves
$$dX_t=[ \Omega  + (d-1)\epsilon^2I^n_d + bX_t+X_tb^\top ] dt + u^\top \sqrt{\tilde{X}_t}dW_t a u + u^\top a^\top dW_t^{\top} \sqrt{\tilde{X}_t} u,$$
with $b=u^\top \tilde{b} (u^{-1})^\top$, $\Omega=u^\top \tilde{\Omega} u \in \posm$ and starting from $x=u^\top \tilde{x} u \in \posm$. After some calculations, we obtain
$\langle d(Y_t)_m,d(Y_t)_{m'} \rangle=(\tilde{c}\tilde{X}_t\tilde{c}^\top)_{m,m'}dt=(cX_tc^\top)_{m,m'}dt$ with $c=\tilde{c}(u^{-1})^\top$; $\langle d(X_t)_{i,j},d(X_t)_{k,l} \rangle=\epsilon^2 \left[(X_t)_{i,k}\indi{j=l\le n}+(X_t)_{i,l}\indi{j=k\le n} + (X_t)_{j,k}\indi{i=l\le n}+  (X_t)_{j,l}\indi{i=k\le n} \right]dt$ and
\begin{align*}
 \langle d(Y_t)_m,d(X_t)_{i,j} \rangle &=\epsilon \left[ (u^\top \tilde{X}_t\tilde{c}^\top )_{m,i} (u^\top a^\top \tilde{\rho})_j + (u^\top \tilde{X}_t\tilde{c}^\top )_{m,j} (u^\top a^\top \tilde{\rho})_i \right] dt \\
&=\epsilon \left[ (X_t {c}^\top )_{m,i} (u^\top a^\top \tilde{\rho})_j + (X_t {c}^\top )_{m,j} (u^\top a^\top \tilde{\rho})_i  \right]dt.
\end{align*}
Since the law of $(X,Y)$ is characterized by its infinitesimal generator, we can assume without loss of generality that $\tilde{\rho} \in \ker(u^\top a^\top)^\bot=\textup{Im}(au)$. Therefore, there is $\rho' \in \R^d$ such that $\tilde{\rho}=au \rho'$, and we set $\rho_i=\rho'_i$ for $i\le n$ and $\rho_i=0$ for $n< i\le d$. We have $|\rho|^2=(\rho')^\top I^n_d \rho' =|\tilde{\rho}|^2\le 1$, and therefore $(X,Y)$ follows the same law as the solution of~\eqref{SDE_GeneralY} and~\eqref{SDE_General}, and we have $r_t=\varphi+\sum_{i=1}^p Y^i_t+\Tr(\gamma X_t)$ with $\gamma=u^{-1}\tilde{\gamma} (u^{-1})^\top$.
\end{remark}

\subsection{Change of measure and Laplace transform}\label{MeasLapSec}

In the fixed income market, the pricing of vanilla products is often (if not always) made under a suitably chosen equivalent martingale measure different from the risk-neutral measure. It is thus important to characterize the distribution of the underlying state variables under these measures. The forward-neutral measures are probably the most important example of such pricing measures. In this paragraph, we will see that the dynamics of the factors remains affine and keeps the same structure under the forward measures.

\subsubsection{Dynamics under the forward-neutral measures}\label{MeasSec}
We assume that the condition~\eqref{cond_gamma} holds.
Let $Q^U$ denote the $U$-forward neutral probability, which is defined on~$\mathcal{F}_U$ by
$$\frac{d Q^U}{d \mathbb{P}}=\frac{e^{-\int_0^Ur_sds}}{P_{0,U}}.$$
This is  the measure associated with the numeraire $P_{t,U}$. It comes from the martingale property of discounted asset prices that for $t\in (0,U)$,
\begin{eqnarray*}
\frac{d\left(e^{-\int_0^tr_sd_s}P_{t,U}\right) }{e^{-\int_0^tr_sd_s}P_{t,U}}&=&2\epsilon\Tr(D(U-t)\sqrt{X_t}dW_tI^n_d)+B(U-t)^{\top} c\sqrt{X_t}dW_t\rho+\bar\rho B(U-t)^{\top} c\sqrt{X_t}dZ_t \\
 &=&\Tr([2\epsilon I^n_d D(U-t)\sqrt{X_t} + \rho B(U-t)^{\top} c\sqrt{X_t}]dW_t)+\bar\rho B(U-t)^{\top} c\sqrt{X_t}dZ_t.
\end{eqnarray*}

\noindent From Girsanov's theorem, the processes
\begin{eqnarray*}
  dW^U_t &=& dW_t-\sqrt{X_t}(2\epsilon D(U-t)I^n_d+c^{\top} B(U-t)\rho^{\top} )dt \\
  dZ^U_t &=& dZ_t-\bar\rho\sqrt{X_t}c^{\top} B(U-t)dt
\end{eqnarray*}
are respectively matrix and vector valued Brownian motions under $Q^U$ and are independent. This yields to the following dynamics for $Y$ and $X$ under $Q^U$:
\begin{eqnarray}
  \label{sigF}dX_t &=& (\Omega+(d-1)\epsilon^2I^n_d+ b^U(t)X_t +X_t (b^U(t))^{\top} )dt+ \epsilon\left(\sqrt{X_t}dW^U_tI^n_d+I^n_d(dW^U_t)^{\top} \sqrt{X_t}\right) \label{X_Ufwd}\\
  \label{YF} dY_t &=& \kappa(\theta-Y_t)dt+ cX_tc^{\top} B(U-t)dt+2\epsilon cX_tD(U-t)I^n_d\rho dt+c\sqrt{X_t}(dW^U_t\rho+\bar\rho dZ^U_t),\label{Y_Ufwd}
\end{eqnarray}
with  $b^U(t)=b+2\epsilon^2I^n_dD(U-t)+\epsilon I^n_d\rho B(U-t)^{\top} c$. 

\subsubsection{Laplace transforms}\label{LaplaceSec}

We are now interested in calculating the law of $(X_T,Y_T)$ under the $U$-forward measure for $T\le U$. More precisely, we calculate $\E^{Q^U}\Big[\exp(\Tr(\Gamma X_T)+\Lambda^{\top} Y_T)|\mathcal{F}_t\Big]$ for $t\in[0,T]$ by using again Proposition~\ref{FourLapGeneralProp}. We assume that condition~\eqref{cond_gamma} holds and have
\begin{align*}
&\E^{Q^U}\Big[\exp(\Tr(\Gamma X_T)+\Lambda^{\top} Y_T)|\mathcal{F}_t\Big] \\
&=\frac{1}{P_{t,U}} \E\Big[\exp(\Tr(\Gamma X_T)+\Lambda^{\top} Y_T -  (U-t)\varphi - \int_t^U \mathbf{1}_p^\top Y_s ds -\int_t^U \Tr(\gamma X_s) ds )|\mathcal{F}_t\Big] \\
&=\frac{\E \Big[\exp(\Tr(\Gamma+D(U-T) X_T)+ (\Lambda+B(U-T))^{\top} Y_T + A(U-T) - (T-t)\varphi  - \int_t^T \mathbf{1}_p^\top Y_s ds -\int_t^T \Tr(\gamma X_s) ds ) |\mathcal{F}_t\Big]}{\exp(A(U-t)+\Tr(D(U-t)X_t)+B(U-t)^{\top} Y_t)}.
\end{align*}
We consider $\Gamma \in \symm$ and $\Lambda \in \R^p$ such that
$$ - \Gamma  \in \posm \text{ and } |\Lambda_i|\le e^{-\kappa_i(U-T)}/\kappa_i, \text{ for } 1\le i \le p,$$
in order to have $|\Lambda_i+B_i(U-T)|\le 1/\kappa_i$ and  $- (\Gamma+D(U-T)) \in \posm$. By Proposition~\ref{FourLapGeneralProp}, condition~\eqref{cond_gamma} and Remark~\ref{rk_cond_lap}, we get that the expectation is finite and that
\begin{equation}\label{Lap_Ufwd} \E^{Q^U}\Big[\exp(\Tr(\Gamma X_T)+\Lambda^{\top} Y_T)|\mathcal{F}_t\Big]=\exp(A^U(t,T)+\Tr(D^U(t,T)X_t)+B^U(t,T)^{\top} Y_t),
\end{equation}
with $F^U(t,T)=\tilde{F}(T-t)+F(U-T)-F(U-t)$ for $F\in\{A,D,B\}$, where $(\tilde{B},\tilde{D},\tilde{A})$ is the solution of~\eqref{LaplaceGenRiccati} with $\tilde{B}(0)=\Lambda+B(U-T)$, $\tilde{D(0)}=\Gamma+D(U-T)$, $\tilde{A}(0)=0$, $\bar \Lambda=\mathbf{1}_p$ and $\bar\Gamma=-\gamma$.

\begin{nt_corollaire}Let~\eqref{cond_gamma} hold.
For $\Gamma \in \symm$ and $\Lambda \in \R^p$ such that $ - \Gamma  \in \posm$  and $|\Lambda_i|\le e^{-\kappa_i(U-T)}/\kappa_i$  for $1\le i \le p$, $\E^{Q^U}\Big[\exp(\Tr(\Gamma X_T)+\Lambda^{\top} Y_T)|\mathcal{F}_t\Big]<\infty$ a.s. for any $t\in[0,T]$ and is given by~\eqref{Lap_Ufwd}.
\end{nt_corollaire}

Let us mention that in practice, the formula above for  $A^U(t,T)$, $D^U(t,T)$ and $B^U(t,T)$ requires to solve two different ODEs. It may be more convenient to use the following one that can be easily deduced from dynamics of $(X,Y)$ under the $U$-forward measure:
\begin{equation}
\label{LaplaceRiccati}
\left\{
\begin{array}{rl}
\frac{\partial B^U}{\partial t}(t,T)&= \kappa^{\top} B^U(t,T), \ B^U(T,T)=\Lambda, \\
\\
-\frac{\partial D^U}{\partial t}(t,T)&=2\epsilon^2D^U I^n_dD^U+D^U (b^U(t)+\epsilon I^n_d\rho (B^U)^{\top} c)+(b^U(t)+\epsilon I^n_d\rho (B^U)^{\top} c)^{\top} D^U+\frac{1}{2}c^{\top} B^U(B^U)^{\top} c\\
\\
 & +c^{\top} B^UB(U-t)^{\top} c+\epsilon D(U-t)I^n_d\rho (B^U)^{\top} c +\epsilon c^{\top} B^U\rho^{\top} I^n_dD(U-t), \ D^U(T,T)=\Gamma,\\
-\frac{\partial A^U}{\partial t}(t,T)&=B^U(t,T)^{\top} \kappa\theta+\mathrm{Tr}\left(D^U(t,T)(\Omega+\epsilon^2(d-1)I^n_d)\right), \ A^U(T,T)=0.\\
\end{array}
\right.
\end{equation}


\section{Expansion of the volatility smile around the LGM}\label{Sec_PriceExp}

The goal of this section is to provide the asymptotic behaviour of the Caplet and Swaption prices when  the volatility parameter $\epsilon$ is close to zero. The practical interest of these formulas is to give quickly a proxy for these prices. Thus, they give a tool to calibrate the model parameters to the smile.  Let us mention here that expansions of Gram-Charlier type can be also be applied to price caplets and swaptions thanks to the affine structure of the model, see for example Collin-Dufresne and Goldstein~\cite{DGE} and Tanaka et al.~\cite{KT}. Some numerical examples are presented in~\cite{Palidda} for the pricing of caplets in our model. Here, we only present the expansion with respect to $\epsilon$ since it is in accordance with our presentation of the model as a perturbation of the LGM.

 The arguments that we use in this section to obtain the expansion have been developed in the book of Fouque et al. \cite{Fouque}. They rely on an expansion of the infinitesimal generator with respect to~$\epsilon$.  Recently, this technique was applied by Bergomi and Guyon~\cite{Bergomi} to provide approximation under a multi factor model for the forward variance. Here, we have to take into account some specific  features of the fixed income and work under the appropriate probability measure to apply these arguments.
Not surprisingly the zero order term in the expansion is exactly the volatility of the LGM with a time-dependent variance-covariance matrix. More interestingly the higher order terms allow to confirm the intuitions on the role of the parameters and factors that determine the shape and dynamics of the volatility.

Last, we have to mention that the calculations presented in this section are rather formal. In particular, we implicitly assume that the caplet and swaption prices are smooth enough and admit expansions with respect to~$\epsilon$.  A rigorous proof of these expansions is beyond the scope of this paper.

\subsection{Price and volatility expansion for Caplets}\label{ExpansionCaplets}
 From~\eqref{CapletPricing}, the only quantity of interest in order to understand the Caplets volatility cube is what we call the forward Caplet price

\begin{equation*}
    \mathrm{FCaplet}(t,T,\delta)=\E^{T+\delta}\left[\left(L_T(T,\delta)- K\right)^+|\mathcal{F}_t\right],
\end{equation*}

\noindent which can be rewritten as a call option on the forward zero coupon bond $\frac{P_{t,T}}{P_{t,T+\delta}}$

\begin{equation*}
    \mathrm{FCaplet}(t,T,\delta)=\frac{1}{\delta}\E^{T+\delta}\left[\left(\frac{P_{T,T}}{P_{T,T+\delta}}-(1+\delta K)\right)^+\bigg|\mathcal{F}_t\right].
\end{equation*}

\noindent Since $(X,Y)$ is a Markov process,  $\mathrm{FCaplet}(t,T,\delta)$ is a function of $(X_t,Y_t)$ and therefore we can define the forward price function

\begin{equation}\label{CallForward}
    P(t,x,y)=\E^{T+\delta}\left[\left(\frac{P_{T,T}}{P_{T,T+\delta}}-(1+\delta K)\right)^+\bigg|X_t=x, Y_t=y\right].
\end{equation}
The goal of Subsection~\ref{ExpansionCaplets} is to obtain the second order expansion~\eqref{expansionPcaplet} of~$P$ with respect to $\epsilon$.

\subsubsection{A convenient change of variable}

We want to get an expansion of the caplet price with respect to~$\epsilon$. To do so, we need a priori to get an expansion  to~$\epsilon$  of the infinitesimal generator of the process $(X,Y)$ under the probability $Q^{T+\delta}$. However, we can make before a change of variable that simplifies this approach. Thus, we define
$$\varmaj_t=\Delta A(t,T,\delta)+\Tr(\Delta D(t,T,\delta) X_t)+\Delta B(t,T,\delta)^{\top} Y_t,$$
with
\begin{eqnarray*}
    \Delta A(t,T,\delta)&=&A(t,T)-A(t,T+\delta)\\
    (\Delta B,\Delta D)(t,T,\delta)&=&(B,D)(T-t)-(B,D)(T+\delta-t)
\end{eqnarray*}
Thus, we have $\frac{P_{t,T}}{P_{t,T+\delta}}=e^{H_t}$. It is well known that $\frac{P_{t,T}}{P_{t,T+\delta}}$ is a martingale under~$Q^{T+\delta}$, see e.g. Proposition 2.5.1 in Brigo and Mercurio~\cite{BM}. Thus, we get by It\^{o} calculus from~\eqref{X_Ufwd} and~\eqref{Y_Ufwd} that $(X,H)$ solve the following SDE
\begin{eqnarray}
dX_t &=& (\Omega+\epsilon^2(d-1) I^n_d + b^{T+\delta}(t)X_t +X_t (b^{T+\delta}(t))^{\top})dt+ \epsilon\sqrt{X_t}dW^{T+\delta}_tI^n_d +\epsilon I^n_d (dW^{T+\delta}_t)^{\top}\sqrt{X_t} ,\nonumber \\
  \nonumber d \varmaj_t &=& -\frac{1}{2}\left(\Delta B^{\top}cX_tc^{\top}\Delta B+4\epsilon^2\Tr(\Delta DI^n_d\Delta DX_t)+2\epsilon(\Delta B^{\top}c X_t\Delta DI^n_d\rho)\right)dt \\
  && +\Delta B^{\top}c\sqrt X_t (dW^{T+\delta}_t\rho +\bar\rho dZ^{T+\delta}_t)+2\epsilon\Tr(\Delta D\sqrt X_t dW^{T+\delta}_t I^n_d). \label{sigXL}
\end{eqnarray}
Therefore,  $P(t,x,y)=\E^{T+\delta}\left[\left(e^{H_T}-(1+\delta K)\right)^+|X_t=x, Y_t=y\right]$ only depends on~$(x,y)$ through $(x,h)$ where $\varmin = \Delta B(t,T,\delta)^{\top}y+\Tr\left(\Delta D(t,T,\delta) x\right)+\Delta A(t,T,\delta)$, and we still denote by a slight abuse of notations
$$  P(t,x,h)=\E^{T+\delta}\left[\left(e^{H_T}-(1+\delta K)\right)^+|X_t=x, H_t=h\right].$$

\noindent Let us emphasize that this change of variable is crucial in order to apply an expansion procedure similar to the one of Bergomi and Guyon~\cite{Bergomi}. It allows to reduce the dimensionality of the underlying state variable. The variable $\varmaj$ is one-dimensional and it is the only variable that appears in the payoff of the caplet. Though this is obvious from the definition of the model, we insist on the fact that the implied volatility of caplets is a function of the factors $X$ only. This appears clearly in the SDE (\ref{sigXL}), $\varmaj_t$ can be viewed as continuous version of the forward Libor rate and its volatility depends on the factors $X$ only.

\subsubsection{Expansion of the price}

From the SDE~\eqref{sigXL},~\eqref{crochets_1},~\eqref{crochets_2} and~\eqref{crochets_3}, we get the following PDE representation of $P$:

\begin{eqnarray*}
  \partial_t P + \cL(t) P   &=&0 \\
  P(T,x,\varmin)&=& (e^\varmin-(1+\delta K))^+
\end{eqnarray*}

\noindent where $\cL(t)$ is the infinitesimal generator of~\eqref{sigXL}. We assume that $P$ admits a second order expansion \begin{equation}\label{expansionPcaplet}P=P_0+\epsilon P_1+\epsilon^2 P_2+o(\epsilon^2).
\end{equation}
Our goal is to calculate in a quite explicit way the value of $P_0$, $P_1$ and $P_2$. We assume in our derivations that these functions $P_0$, $P_1$ and $P_2$ are smooth enough. To determine the value of $P_0, P_1$ and $P_2$, we proceed as Bergomi and Guyon~\cite{Bergomi} and make an expansion of the generator $\cL(t)=\cL_0(t)+\epsilon \cL_1(t) +\epsilon^2 \cL_2(t) + \dots$ in order to obtain the PDEs satisfied by $P_0, P_1$ and $P_2$. Namely, we obtain
\begin{eqnarray*}
    \partial_t P_0 +\cL_0(t)P_0 &=& 0,\quad P_0(T,x,\varmin)=(e^\varmin-(1+\delta K))^+,\\
    \partial_t P_1 +\cL_0(t)P_1 + \cL_1(t)P_0&=&0,\quad P_1(T,x,\varmin)=0,\\
    \partial_t P_2 +\cL_0(t)P_2 + \cL_2(t)P_0+\cL_1(t) P_1&=&0,\quad P_2(T,x,\varmin)=0.
\end{eqnarray*}
Thus, we can solve first the PDE for $P_0$, then for $P_1$ and so on. Let $\mathrm{BS}(h,v)=\E\left[\left(\exp\left(h-\frac{1}{2}v+\sqrt{v}G\right)-(1+\delta K)\right)^+\right]$ with $G\sim N(0,1)$ denote the Black-Scholes price with realized volatility~$v$. We obtain easily that
\begin{equation*}
    P_0(t,x,\varmin)=\mathrm{BS}(\varmin,v(t,T,\delta,x)),
\end{equation*}

\noindent with 

\begin{align}
    v(t,T,\delta,x)&=\int_t^{T}\Delta B(u,T,\delta)^{\top}c X^0_{u-t}(x) c^{\top}\Delta B(u,T,\delta) du, \label{vcaplet}\\
\label{defX0sx}X^0_s(x)&=e^{bs}\left(x+\int_0^se^{-bu}\Omega e^{-b^{\top} u}du \right)e^{b^{\top}s}.
\end{align}
\noindent The higher order terms are given by\footnote{The details of these simple but tedious   calculations are available online {\tt http://arxiv.org/abs/1412.7412} in the first draft of this paper for the caplets and swaptions.}
\begin{eqnarray}
    \label{FormulaCapletExpOrder1}P_1(t,x,\varmin)&=&\left(c_1(t,T,\delta,x)(\partial^3_\varmin-\partial^2_\varmin) +c_2(t,T,\delta,x)(\partial^2_\varmin-\partial_\varmin)\right) P_0(t,x,\varmin)
\end{eqnarray}
and
\begin{align}
    \nonumber P_2(t,x,h)=&\Bigg[\left(d_1(t,T,\delta,x)(\partial^2_\varmin-\partial_\varmin)^2+d_2(t,T,\delta,x)(\partial^2_\varmin-\partial_\varmin)\partial_\varmin +d_3(t,T,\delta,x)(\partial^2_\varmin-\partial_\varmin)\right) \label{FormulaCapletExpOrder2} \\&+\bigg( e_1(t,T,\delta,x) (\partial_h^2-\partial_h)^2 \partial_h^2+ e_2(t,T,\delta,x) (\partial_h^2-\partial_h)^2 \partial_h  + e_3(t,T,\delta,x) (\partial_h^2-\partial_h)^2 \\&+e_4(t,T,\delta,x) (\partial_h^2-\partial_h)\partial_h^2+e_5(t,T,\delta,x) (\partial_h^2-\partial_h)\partial_h+e_6(t,T,\delta,x) (\partial_h^2-\partial_h) \bigg) \Bigg]P_0(t,x,h). \nonumber
\end{align}
The coefficients $c_i$, $d_i$ and $e_i$ are given in Appendix~\ref{CapletsExpansionCalcul}. We recall that the derivatives $\partial^i_hP_0$ of $P_0$ with respect to $h$ can be calculated explicitly, so that the expansion is very efficient from the point of view of the computational time, see Section~\ref{sec_comp_meth}.

\begin{remark} It is easy to obtain then the expansion $v_{\rm Imp} =v_0 +\epsilon v_1 +\epsilon^2 v_2  +o(\epsilon^2)$ of the implied volatility  defined by $\delta \mathrm{FCaplet}(t,T,\delta)=\mathrm{BS}(\varmin,v_{\rm Imp}  )$. We obtain as expected $v_0 = v(t,T,\delta,x)$ and 
\begin{equation}
\label{CapletVolExpOrder1}\frac{v_1}{2} =c_2(t,T,\delta,x)+c_1(t,T,\delta,x)\left(\frac12-\frac{h-\log(1+\delta K)}{  v_0}\right).
\end{equation}
Since neither $c_1$ nor $c_2$ depend on the strike,  the skew is  at the first order in $\epsilon$  proportional  to $c_1$, that is at its turn a linear function of~$\rho$. We have in particular a flat smile at the first order when $\rho=0$, as one may expect. 
\end{remark}

\subsection{Price and volatility expansion for Swaptions}\label{ExpansionSwaptions}

\noindent From~\eqref{SwaptionPricing}, the only quantity of interest in order to understand the swaptions volatility cube is what we call the annuity-forward swaption price

\begin{equation*}
    \mathrm{AFSwaption}(t,T,m,\delta)=\E^{A}\left[\left(S_t(T,m,\delta)-K\right)^+|\mathcal{F}_t\right].
\end{equation*}

\noindent It is standard to view swaptions as a basket option of forward Libor rates with stochastic weights, we have

\begin{eqnarray} 
  \label{SwapAsBasket}S_t(T,m,\delta) &=& \sum_{i=1}^m\omega^i_tL_t(T+(i-1)\delta,T+i\delta)\\
  \label{Weights}\omega^i_t &=& \frac{P_{t,T+i\delta}}{\sum_{i=1}^mP_{t,T+i\delta}}.
\end{eqnarray}

\noindent The difficulty here comes from the fact that forward Libor rates, and the stochastic weights are complicated functions of the state variables $(X,Y)$. The first implication is that the change of measure between $\Px$ and $Q^{A}$ is also complicated and the dynamics of the state variables under this new measure is quite unpleasant to work with. The second implication is that we cannot directly operate a convenient change of variable as we did for caplets. In order to derive an expansion for swaptions we thus proceed stepwise. First, we use a standard approximation that freezes the weights at their initial value (see for example Brigo and Mercurio~\cite{BM} p.~239, d'Aspremont \cite{DBLP:journals/corr/cs-CE-0302034} and Piterbarg~\cite{VP}). This is justified by the fact the variation of the weights is less important than the variation of the forward Libor rates\footnote{To the best of our knowledge there have been very few attempts to quantify either theoretically or numerically this statement. In \cite{DBLP:journals/corr/cs-CE-0302034} d'Aspremont investigates the accuracy of the approximation for pricing swaptions in the log-normal BGM model, he shows that the approximation is less efficient for long maturities and long tenors.}. Second, we  use a similar approximation for the swap rate. Thus, the approximated swap rate is an affine function of the underlying state variables, which enables us to take advantage of the affine structure of the model. Let us mention that this technique is similar to the quadratic approximation of the swap rate proposed by Piterbarg in~\cite{VP}. Finally we perform our expansion on the affine approximation of the swap rates and obtain the second order expansion~\eqref{expansionPswaption}, which is the main result of Subsection~\ref{ExpansionSwaptions}.

\subsubsection{Dynamics of the factors under the annuity measure}

The annuity measure knowing the information up to date $t$, $Q^A|\mathcal F_t$ is defined   by

\begin{equation*}
    \frac{dQ^A}{d\Px} \bigg|_{\mathcal F_t}=e^{-\int_t^Tr_s ds}\frac{A_T(T,m,\delta)}{A_t(T,m,\delta)}.
\end{equation*}

\noindent It comes from the martingale property of discounted asset prices under the risk neutral measure that

\begin{equation}\label{MeasureChangeAnnuity}
    \frac{d\left(e^{-\int_0^t r_s ds}A_t(T,m,\delta)\right)}{e^{-\int_0^t r_sds}A_t(T,m,\delta)}=\sum_{i=1}^m\omega^i_t\left(B(T+i\delta-t)^{\top}c\sqrt{X_t} (dW_t\rho+\bar\rho dZ_t)+2\epsilon\Tr\left(D(T+i\delta-t)\sqrt{X_t} dW_t I^n_d\right)\right).
\end{equation}

\noindent From Girsanov's theorem, the change of measure is given by

\begin{eqnarray*}
  dW^{A}_t &=& dW_t-\sqrt{X_t}\left(2\epsilon\sum_{i=1}^m\omega^i_tD(T+i\delta-t)I^n_d+c^{\top}B(T+i\delta-t)\rho^{\top}\right)dt,\\
  dZ^{A}_t &=& dZ_t-\bar\rho\sqrt{X_t}c^{\top}\sum_{i=1}^m\omega^i_tB(T+i\delta-t)dt.
\end{eqnarray*}

\noindent This allows us to calculate from~\eqref{SDE_GeneralY} and~\eqref{SDE_General} the dynamics of the state variables under the annuity measure $Q^A$:
\begin{eqnarray}
 \label{YAnnuity} dY_t &=& \left(\kappa(\theta-Y_t)+cX_tc^{\top}\sum_{k=1}^m\omega^k_tB(T+k\delta-t)+2\epsilon cX_t\sum_{k=1}^m\omega^k_tD(T+k\delta-t)I^n_d\rho \right)dt \\
&& +c \sqrt{X_t}(\bar \rho dZ^A_t + dW^A_t \rho), \nonumber \\
  \label{sigAnnuity}dX_t &=& (\Omega+\epsilon^2(d-1) I^n_d + b^{A}(t)X_t +X_t (b^{A}(t))^{\top})dt+ \epsilon\left(\sqrt{X_t} dW^A_tI^n_d+I^n_d(dW^A_t)^{\top}\sqrt{X_t}\right),
\end{eqnarray}

\noindent where $b^{A}(t)=b+\epsilon I^n_d\rho \sum_{k=1}^m\omega^k_tB(T+k\delta-t)^{\top}c+2\epsilon^2I^n_d\sum_{k=1}^m\omega^k_tD(T+k\delta-t).$

\subsubsection{An affine approximation of the forward swap rate}

The forward swap rate is a martingale under the annuity measure $Q^{A}$. Therefore, we can only focus on the martingale terms when applying It\^o's formula to $\frac{P_{t,T}-P_{t,T+m \delta}}{\sum_{i=1}^mP_{t,T+i\delta}}$, and we get from~\eqref{zc} that
\begin{align}
 \label{SAnnuity}
&   dS_t(T,m,\delta) =  \\
&= \left[ \omega^0_t B(T-t)^{\top} -  \omega^m_t B(T+m\delta-t)^{\top} - S_t(T,m,\delta) \sum_{k=1}^m\omega^k_t B(T+k\delta-t)^{\top} \right] c\sqrt{X_t} (dW^A_t\rho+\bar\rho dZ^A_t)\nonumber  \\
& \ + 2\epsilon  \Tr\left( \left[ \omega^0_t D(T-t) -  \omega^m_t D(T+m\delta-t) - S_t(T,m,\delta) \sum_{k=1}^m\omega^k_t D(T+k\delta-t) \right]\sqrt{X_t} dW^A_t  I^n_d\right) \nonumber
\end{align}

\noindent By a slight abuse of notations,  we will now drop the $(T,m,\delta)$ dependence of the swap rate and simply denote by $S_t$ its time $t$ value. We now use the standard approximation that consists in  freezing the weights $\omega^k_t$ and the value of the swap rate $S_t$ in the right-hand side to their value at zero. We then have
\begin{equation}\label{SAnnuityApprox}
   dS_t=B^S(t)^{\top}c\sqrt{X_t} (dW^A_t \rho+\bar\rho dZ^A_t)+2\epsilon\Tr\left(D^S(t)\sqrt{X_t} dW^A_t I^n_d\right),
\end{equation}

\noindent where

\begin{eqnarray*}
  (B,D)^S(t) &=& \omega^0_0(B,D)(T-t)-\omega^m_0(B,D)(T+m\delta-t)-S_0(T,m,\delta)\sum_{k=1}^m\omega^k_0(B,D)(T+k\delta-t).
\end{eqnarray*}
These coefficients are time-dependent and deterministic. We do the same approximation on~$X$ and get
\begin{equation}\label{SDEforX}dX_t = (\Omega+\epsilon^2(d-1) I^n_d + b_0^{A}(t)X_t +X_t (b_0^{A}(t))^{\top})dt+ \epsilon\left(\sqrt{X_t} dW^A_tI^n_d+I^n_d(dW^A_t)^{\top}\sqrt{X_t}\right),
\end{equation}

\noindent where \begin{equation*}
b_0^{A}(t)=b+\epsilon I^n_d\rho \sum_{k=1}^m\omega^k_0B(T+k\delta-t)^{\top}c+2\epsilon^2I^n_d\sum_{k=1}^m\omega^k_0D(T+k\delta-t). \label{def_bA0}
\end{equation*}
 Thanks to this approximation, we remark that the process, that we still denote by $(S_t,X_t)$ for simplicity, is now affine. This enables us to use again the same argument as for the Caplet prices to get an expansion of the price.  The only difference lies in the fact the expansion is around the Gaussian model rather then around the log-normal model.

\subsubsection{The swaption price expansion}

Let $P^S(t,x,s)=\E^{A}\left[\left(S_t-K\right)^+ |S_t=s,X_t=x \right]$ denote the price of the Swaption at time~$t\in [0,T]$. It solves the following pricing PDE
\begin{align*}
\partial_t P^S + \cL(t) P^S&= 0, \ t\in (0,T), \
P^S(T,x,s)=(s-K)^+,
\end{align*}
where $\cL$ is the infinitesimal generator of the SDE~\eqref{SAnnuityApprox} and~\eqref{SDEforX}.
Again, we assume that $P^S$ admits a second order expansion \begin{equation}\label{expansionPswaption}P^S=P^S_0+\epsilon P^S_1+\epsilon^2 P^S_2+o(\epsilon^2)
\end{equation}
and that the functions $P^S_0$, $P^S_1$ and $P^S_2$ are smooth enough. Let $\mathrm{BH}(s,v)=\E\left[\left(s+\sqrt{v}G-K\right)^+\right]$ with $G\sim N(0,1)$ denote the European call price with strike $K$ in the Bachelier model with realized volatility~$v>0$ and spot price $s\in \R$. We obtain
\begin{equation}
    \label{FormulaSwaptionExpOrder0}P^S_0(t,x,s)=\mathrm{BH}(s,v^S(t,T,x)),
\end{equation}

\noindent where

\begin{equation}\label{VolSwaption}
    v^S(t,T,x)=\int_t^{T}B^S(r)^{\top}c X^0_{r-t}(x) c^{\top}B^S(r)dr,
\end{equation}
and $X^0_s(x)$ is defined by~\eqref{defX0sx}. The higher order term are
\begin{eqnarray}
   \label{FormulaSwaptionExpOrder1} P^S_1(t,x,s)&=&\left(c^S_1(t,T,x)\partial^3_s+c^S_2(t,T,x)\partial^2_s\right)\mathrm{BH}(s,v^S(t,T,x)), \\
P^S_2(t,x,s)&=&\Bigg[d^S_1(t,T,x)\partial^4_s+d^S_2(t,T,x)\partial^3_s+d^S_3(t,T,x)\partial^2_s \\
    \label{FormulaSwaptionExpOrder2}&&+ e^S_1(t,T,x)\partial^6_s+e^S_2(t,T,x)\partial^5_s+e^S_3(t,T,x)\partial^4_s  \nonumber\\
&&  +e^S_4(t,T,x)\partial^4_s+e^S_5(t,T,x)\partial^3_s+e^S_6(t,T,x)\partial^2_s\Bigg]\mathrm{BH}(s,v^S(t,T,x)), \nonumber
\end{eqnarray}

\noindent where the coefficients $c^S_i$, $d^S_i$ and $e^S_i$ are given in Appendix~\ref{SwaptionExpansionCalcul}. Again, the derivatives of $P^S$ with respect to~$s$ can be calculated explicitly, which makes this formula very efficient from a computational point of view. 

\subsection{Numerical results}\label{Num_sec_exp}

We now assess on some examples the accuracy of the expansions we have developed. In practice we are interested in knowing up to what level of parameters and for what set of maturities and tenors the accuracy of the expansion is satisfactory. Let us recall that our expansion for caplets results from the combination of two expansions, the first on the support matrix function $D$ up to the order 1 in $\epsilon$ is given by (\ref{ExpSupportD0}) and (\ref{ExpSupportD1}), the second on the infinitesimal generator of the Markov process $(X,\varmaj)$ defined by (\ref{sigXL}). By construction the approximation of $D(\tau)$ will be more accurate for a small $\tau$. As a consequence, for a given set of parameters, the full expansion will likely to be more accurate for short maturities, short tenors caplets. The expansion for swaptions results from a supplementary approximation step, which consists in freezing the weights $\omega^i$ in the diffusion of the Markov process $(X,S)$ defined by (\ref{sigAnnuity}) and (\ref{SAnnuity}). This approximation can be inaccurate for long maturities and long tenors swaptions. Therefore, we expect the full expansion to be more accurate for short maturities, short tenors swaptions. \\ 

\begin{figure}[h!]
\centering
\includegraphics[scale = 0.35]{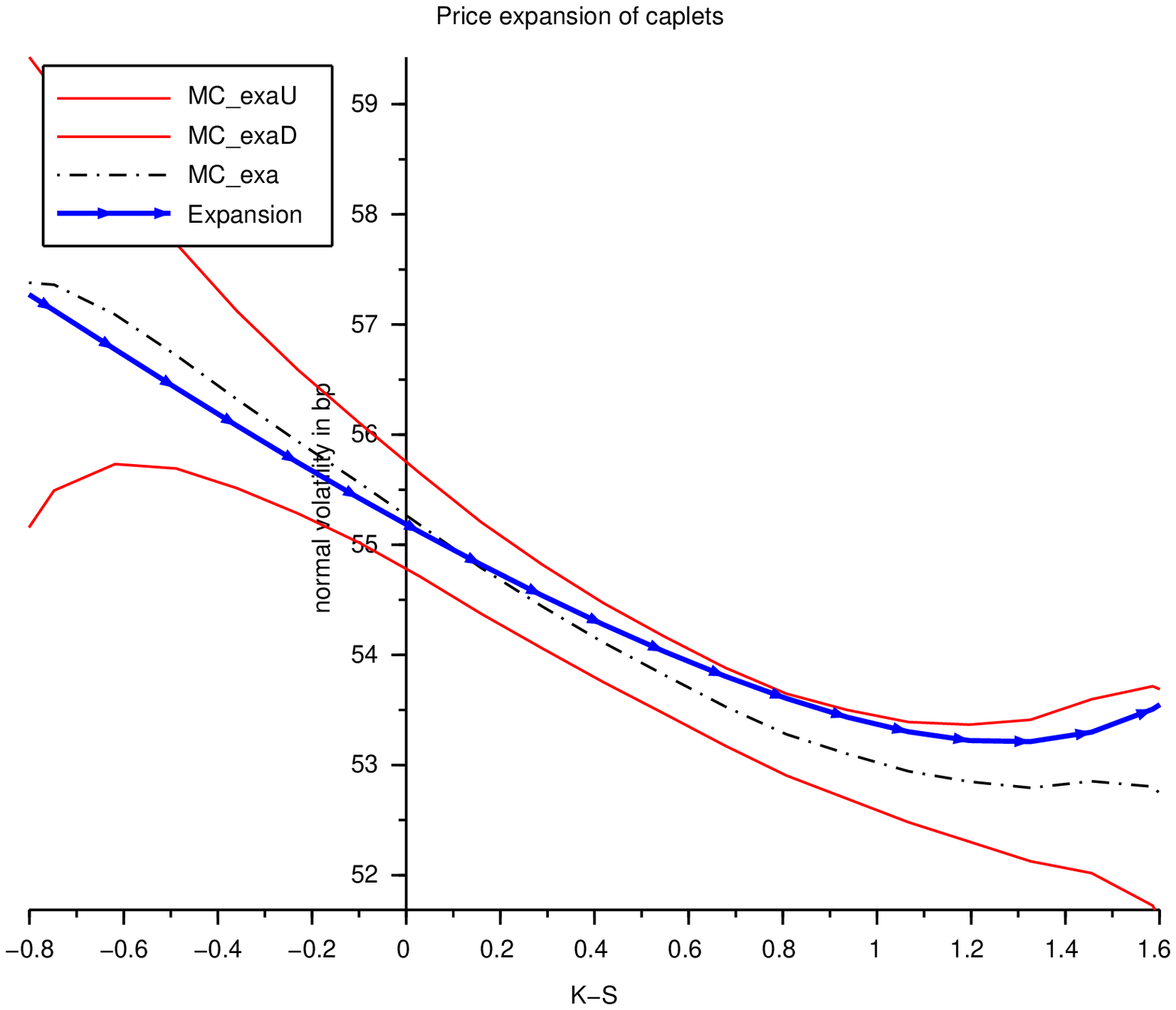}
\includegraphics[scale = 0.35]{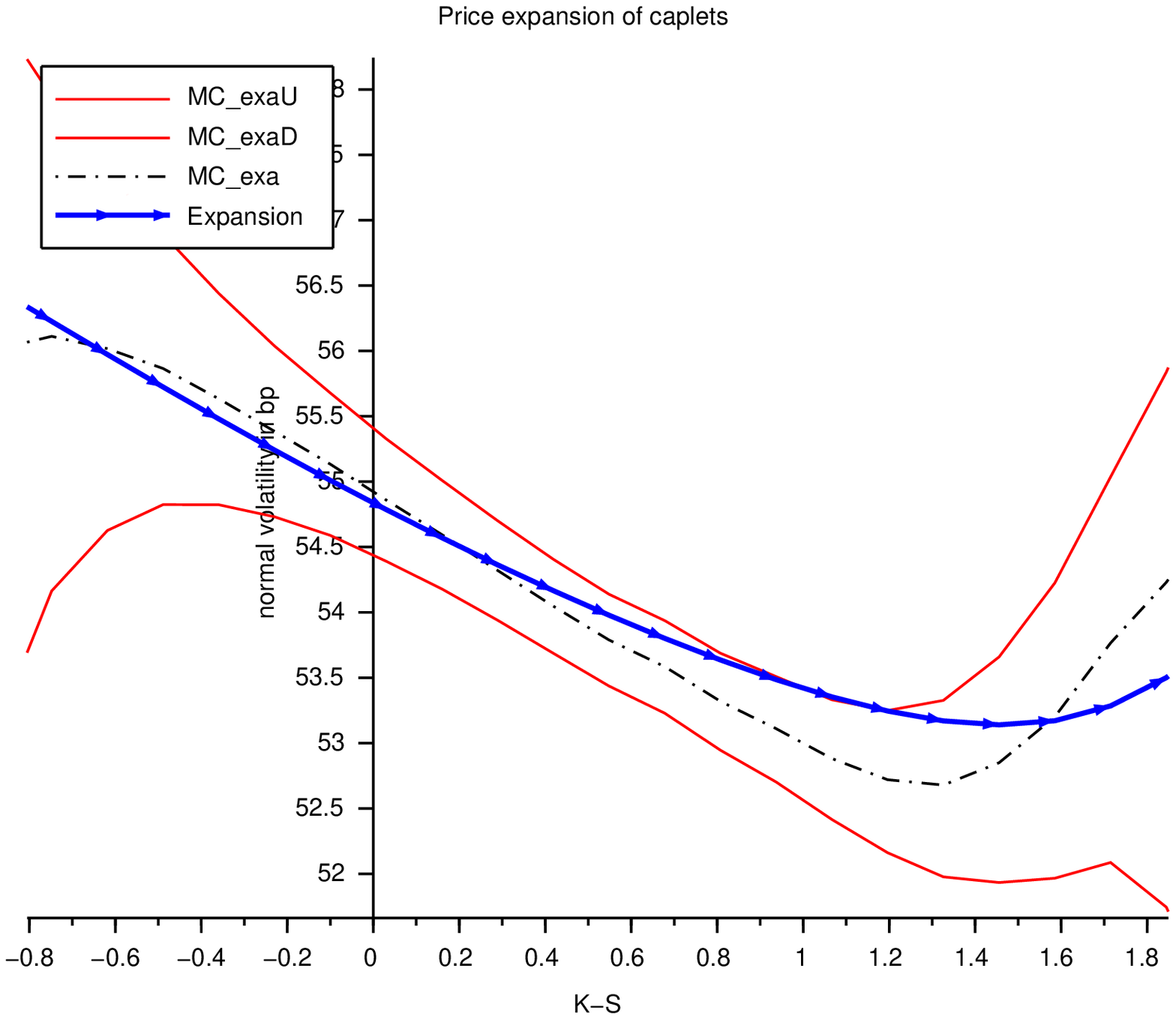}\\
\caption{$\rho^\top=(-0.4,-0.2)$. Plot of the expanded smile of a $1Y\times1Y$ caplet against the Monte Carlo smile obtained with 100000 paths and a discretization grid of 4 points for different values of the parameter $\epsilon$, respectively from left to right $\epsilon=0.002$ and $\epsilon=0.0015$. The forward Libor rate value is $L(0,1Y,1Y)=1.02\%$. }
\label{CapletSmileExp1}
\end{figure}

\begin{figure}[h!]
\centering
\includegraphics[scale = 0.35]{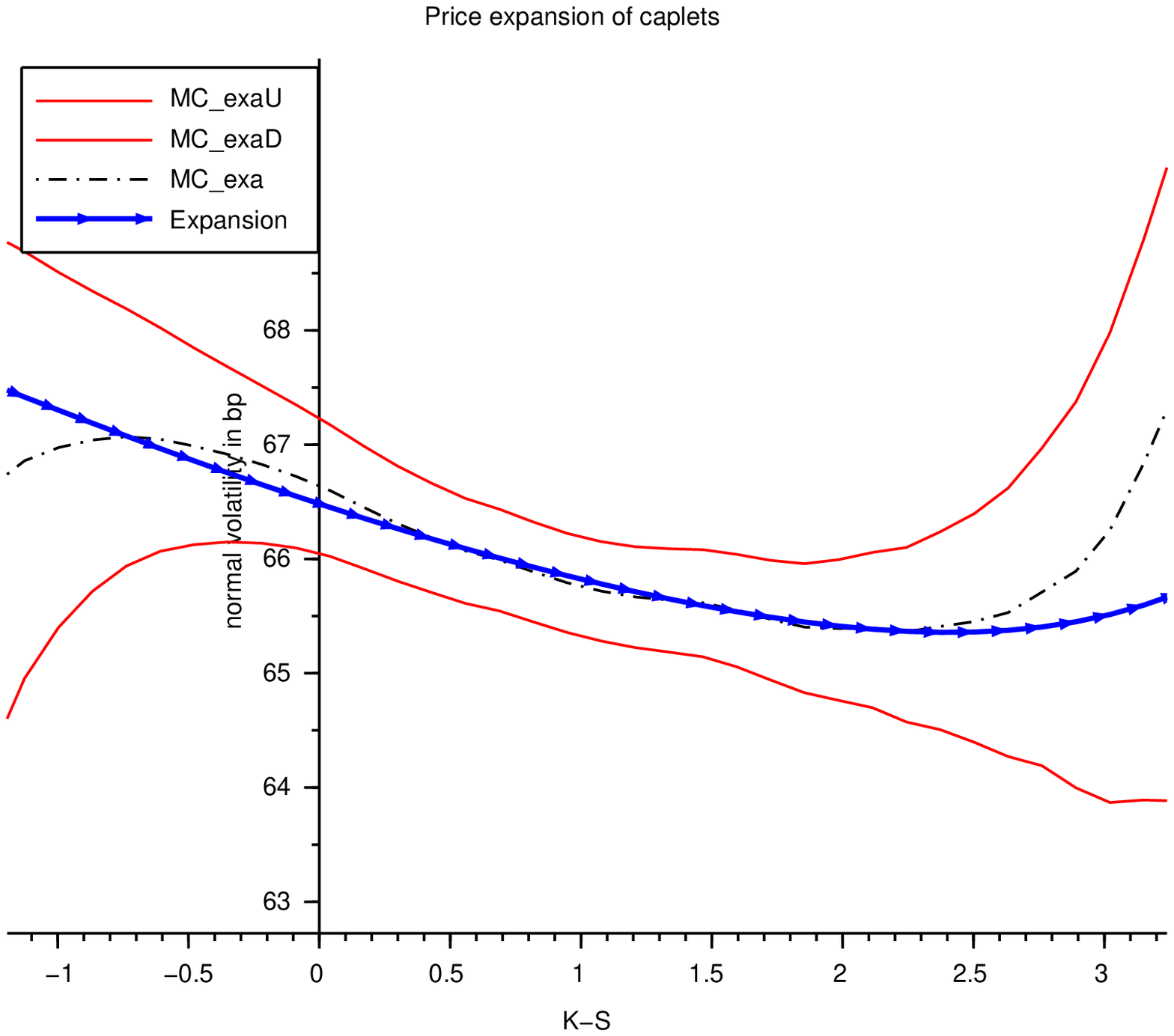}
\includegraphics[scale = 0.35]{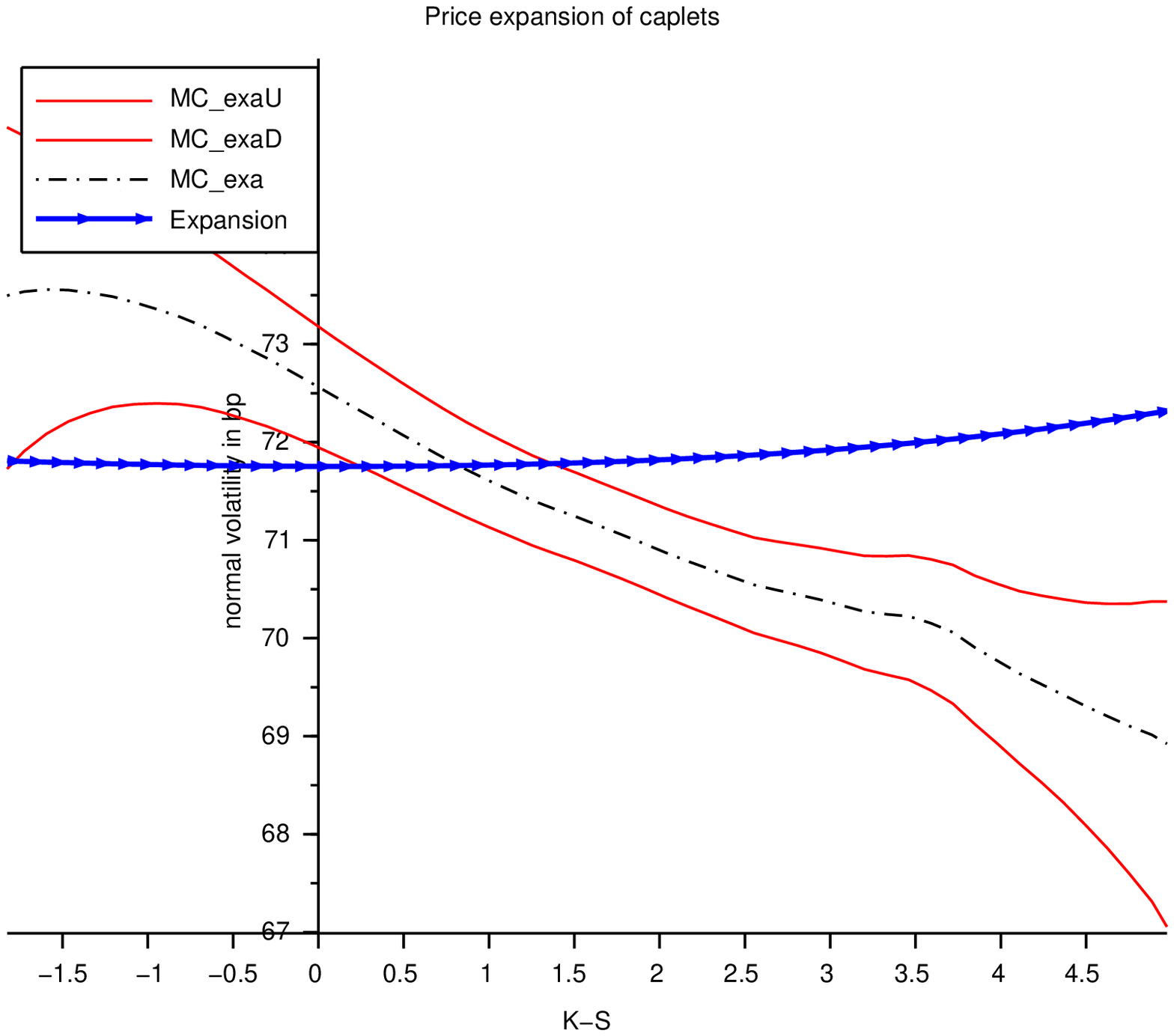}\\
\caption{$\rho^\top=(-0.4,-0.2)$ and $\epsilon=0.0015$. Left: plot of the expanded smile of a $6M\times2Y$ caplet against the Monte Carlo smile. Right: plot of the expanded smile of a $6M\times5Y$ caplet against the Monte Carlo smile. The Monte Carlo smile is obtained with 100000 paths and a discretization grid of 8 points. The forward Libor rates values are $L(0,6M,2Y)=1.14\%$ and $L(0,6M,5Y)=1.35\%$. }
\label{CapletSmileExp2}
\end{figure}

\begin{figure}[h!]
\centering
\includegraphics[scale = 0.35]{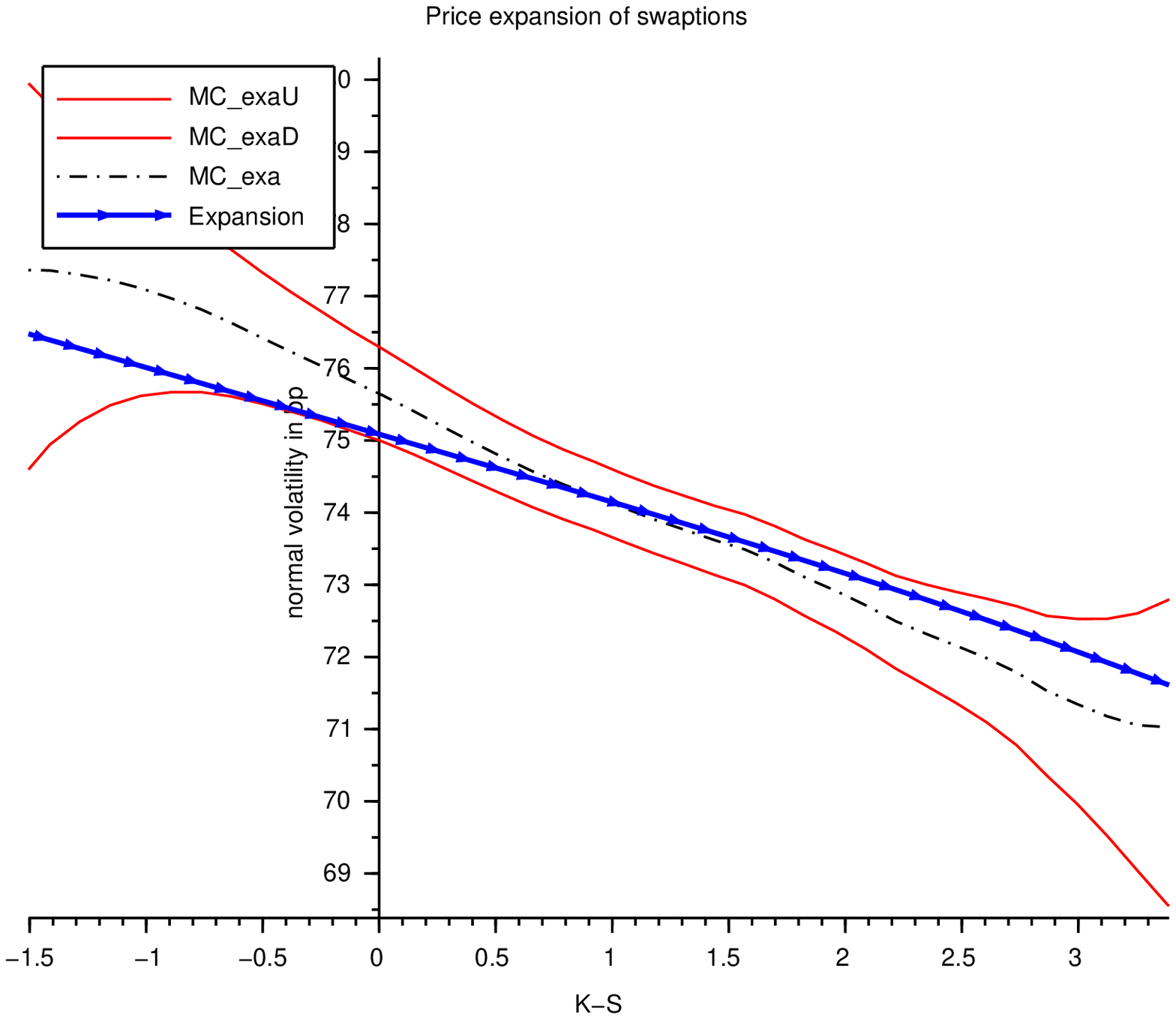}
\includegraphics[scale = 0.35]{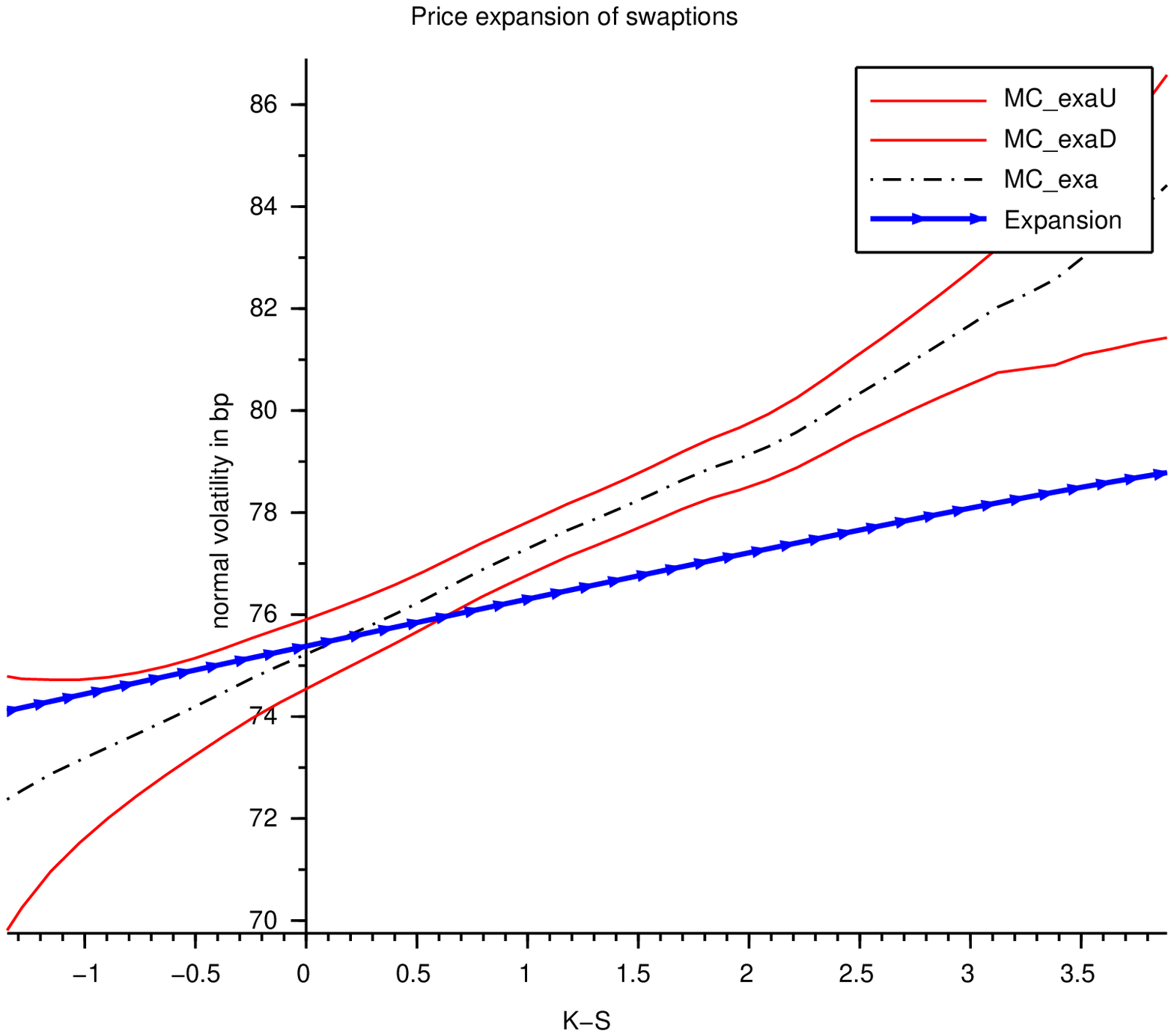}\\
\caption{$\epsilon=0.0015$. Plot of the expanded smile of a $5Y\times2Y$ swaption with coupon payment frequency of 6 months against the Monte Carlo smile obtained with 100000 paths and a discretization grid of 8 points for different values of the parameter $\rho$, from left to right $\rho^\top=(-0.4,-0.2)$ and $\rho^\top=(0.4,0.2)$. The forward swap rate value is $S(0,5Y,2Y)=1.3\%$.}
\label{SwaptionSmileExp3}
\end{figure}

We assess the quality of the price expansion for caplets and swaptions. We compare the expanded price with the price computed using Monte Carlo simulation and the discretization scheme~1 described in Section~\ref{Sec_scheme} on a regular time grid. The expanded prices and the Monte Carlo prices are compared in terms of the normal implied volatility of the forward Libor rate for caplets and of the forward swap rate for swaptions. The implied volatility is given in basis points ($10^{-4}$).  In abscissa is indicated the difference between the strike and the at-the-money value, and the unit is one percent. A $6M\times 2Y$ caplet will denote a caplet with maturity $T=2$ years and tenor $\delta=0.5$ years, while a $5Y\times 2Y$ swaption will denote a swaption with maturity $T=2$ years and tenor $m\delta=5$ years. \\

We have tested different sets of model parameters. The parameters values have been chosen in such a way that the yield curve and volatility levels generated by the model are in line with today's US and EUR interest rates market levels. Here, we only consider the following parameter set with $p=2$ and $d=2$:
\begin{align}
  \kappa &= \mathrm{diag}(0.1,1), \  c = I_d,\   b = -\mathrm{diag}(0.41,0.011), \ \Omega=-(bx_\infty+x_\infty b^{\top})+0.4I_d , \ \gamma=0.001I_d  \label{param_set} \\
  x &= 10^{-4}\left(
          \begin{array}{cccc}
            2.25 & -1.2\\
            -1.2 & 1.\\
          \end{array}
        \right), \ 
  x_\infty=10^{-4}\left(
          \begin{array}{cccc}
            1. & -0.125\\
            -0.125 & 0.25\\
          \end{array}
        \right).\nonumber
\end{align}
We note that $-(b+b^\top)=-2b$ is positive definite. We know from Remark~\ref{rk_cond_lap_2} that the condition of non-explosion will be verified  in general for these set of parameters when $\epsilon$ is small enough, and we have checked that the yield curve given by this parameter set is well defined up to 50 years.\\

\noindent In all the graphics the dotted line gives the Monte Carlo smile obtained with 100000 simulation paths, the solid line with small arrows is the expanded smile, the two continuous solid lines are the upper and lower bounds of the 95\% confidence interval of the Monte Carlo price. Figures~\ref{CapletSmileExp1} and \ref{CapletSmileExp2} show the accuracy of the expansion for the valuation of caplets.  The approximation is accurate for expiries up to 2 years and less accurate with the same parameters for longer expiries. For maturities up to 2 years, the at-the-money volatility of the expanded smile is almost identical to the Monte Carlo smile and the whole expanded smile stays within the 95\% confidence interval. Figure~\ref{SwaptionSmileExp3} shows the accuracy of the expansion for the valuation of swaptions. We observe that the expansion is more accurate for negative values of the correlation parameters $\rho$ (a similar behaviour is observed for Caplets). This can be intuitively understood from the Riccati equation~\eqref{LaplaceGenRiccati}: a negative $\rho$ pushes $D$ to zero while a positive one pushes $D$ away from zero, and the expansion that we use on~$D$ (see~(\ref{ExpSupportD0}) and (\ref{ExpSupportD1})) is then less accurate.  Overall the expansion is accurate at-the-money and is much less accurate out-of-the-money. For example, the graphic on the right hand side of Figure~\ref{CapletSmileExp2} shows that the expanded smile of the 6 months maturity 5 years expiry smile is quite inaccurate and the expanded smile fails to fit the skew of the Monte Carlo smile. However, the difference in the at-the-money volatility between the expanded price and Monte Carlo is around 1 bp.\\

To sum up, the second order expansion is basically accurate for small perturbations and small maturities. Otherwise, one should be careful and rely on other methods such as the Monte-Carlo method or Fourier inversion method. Nonetheless, as discussed in Section~\ref{sec_comp_meth}, the calculation of the expansion is much faster than the other methods. It may thus be relevant to start a calibration routine and select a reasonable set of parameters.


\section{Second order discretization schemes for Monte Carlo simulation}\label{Sec_scheme}

The goal of this section is to construct discretization schemes for the process $(X,Y)$ defined by~\eqref{SDE_GeneralY} and~\eqref{SDE_General}. It is crucial to have an efficient way to simulate the model in order to use it in practice. Ideally, the model should be calibrated to market data to vanilla options such as caplets and swaptions and then be used to calculate exotic option prices. The calculation of these prices is generally made with a Monte-Carlo algorithm which requires to simulate the process~$(X,Y)$.

It is worth to recall that the standard Euler-Maruyama scheme is not well defined for square-root diffusions even in dimension one, see Alfonsi~\cite{Alfonsi1}. We have then to consider a different scheme. We use here the splitting technique that is already used by Ahdida and Alfonsi~\cite{Alfonsi2} for Wishart processes. We explain here briefly the main line of this method and refer to~\cite{Alfonsi1} for precise statements in a framework that embeds affine diffusions. Let us consider that we want to approximate an SDE~$\xi$ with infinitesimal generator~$\cL$ on the regular time grid $t_i=iT/N$, for $i=0,\dots,N$. A scheme is fully described by a probability law $\hat{p}_x(t,dz)$ that approximates the law of $\xi_t$ given $\xi_0=x$. We denote by $\hat{\xi}^x_t$ a random variable following this law. Then, the law of the corresponding discretization scheme $(\hat{\xi}_{t_i},0\le i\le N)$ is as follows: $\hat{\xi}_0=\xi_0$ and $\hat{p}_{\hat{\xi}_{t_i}}(T/N,dz)$ is the conditional law of $\hat{\xi}_{t_{i+1}}$ given $(\hat{\xi}_{t_j},0\le j\le i)$. Then, one would like to know the error made when using the approximation scheme instead of the original process~$\xi$. We have basically the following result, up to technical details that are given in~\cite{Alfonsi1}. If $\hat{\xi}^x_t$  satisfies the following expansion
\begin{equation}\label{exp_sch_O2}\E[f(\hat{\xi}^x_t)]=f(x)+t\cL f(x)+\frac{t^2}{2}\cL^2f(x)+O(t^3)
\end{equation}
for any smooth function~$f$, then
$$\exists C>0, \ |\E[f(\hat{\xi}_{t_N})]- \E[f({\xi}_{t_N})]|\le C/N^2.$$
Thus, to get a weak error of order~$2$, we mainly have to construct a scheme $\hat{\xi}^x_t$ that satisfies~\eqref{exp_sch_O2}. We can construct iteratively second order schemes by splitting the infinitesimal generator. In fact, let us assume that $\cL=\cL_1+\cL_2$ and that $\hat{\xi}^{i,x}_t$ is a second order scheme for~$\cL_i$. Let $\mathbf{B}$ be an independent Bernoulli variable with parameter~$1/2$. Then, the following schemes
\begin{equation}\label{compo_rule}\hat{\xi}^{1,\hat{\xi}^{2,\hat{\xi}^{1,x}_{t/2} }_t}_{t/2} \text{ and } \mathbf{B}\hat{\xi}^{2,\hat{\xi}^{1,x}_t}_t + (1-\mathbf{B})\hat{\xi}^{1,\hat{\xi}^{2,x}_t}_t
\end{equation}
satisfy~\eqref{exp_sch_O2} and are thus second order schemes for~$\cL$. Therefore, a strategy to construct a second order scheme is to split the infinitesimal generator into elementary pieces for which second order schemes or even exact schemes are known.

To use this splitting technique, we first have to calculate the infinitesimal generator of~$(X,Y)$. It is defined for a $\mathcal{C}^2$ function $f: \genm \times \R^p \rightarrow \R$ by $\cL f(x,y)=\lim_{t\rightarrow 0^+} \frac{\E[f(X_t,Y_t)]-f(x,y)}{t}$. From~\eqref{crochets_1},~\eqref{crochets_2} and~\eqref{crochets_3}, we easily get
\begin{align*}
\cL =& \sum_{m=1}^p (\kappa(\theta-y))_m\partial_{y_m}+ \sum_{1\le i,j\le d}(\Omega+(d-1)\epsilon^2I^n_d+bx+xb^\top)_{i,j} \partial_{x_{i,j}}\\
&+\frac 12 \sum_{m,m'=1}^p (cxc^\top)_{m,m'}\partial_{y_m}\partial_{y_{m'}}+ \frac 12 \sum_{m=1}^p\sum_{1\le i,j\le d} \epsilon [(cx)_{m,i}(I^n_d \rho)_j +(cx)_{m,j}(I^n_d \rho)_i ] \partial_{x_{i,j}}\partial_{y_m}\\&+ \frac 12 \sum_{1\le i,j,k,l \le d} \epsilon^2 [x_{i,k}  (I^n_d)_{j,l} +x_{i,l} (I^n_d)_{j,k}+x_{j,k} (I^n_d)_{i,l} +x_{j,l}(I^n_d)_{i,k} ]\partial_{x_{i,j}}\partial_{x_{k,l}}.
\end{align*}
Here, $\partial_{y_m}$ denotes the partial derivative with respect to the $m$-th coordinate in $\R^p$ and $\partial_{x_{i,j}}$ the partial derivative with respect to the element at the $i$-th row and $j$-th column. When $\rho=0$, this operator is simply the sum of the infinitesimal generators for~$X$ and the generator for~$Y$ when $X$ is frozen. We know from~\cite{Alfonsi2} a second order scheme for~$X$. When $X$ is frozen, $Y$ follows an Ornstein-Uhlenbeck process and the law of $Y_t$ is a Gaussian vector that can be sampled exactly. By using the composition rule~\eqref{compo_rule}, we get a second order scheme for $(X,Y)$.

Thus, the difficulty here comes from the correlation between $X$ and $Y$ that has to be handled with care. We first make some simplifications.
The first term $ \sum_{m=1}^p (\kappa(\theta-y))_m\partial_{y_m}$ is the generator of the linear Ordinary Differential Equation $y'(t)=\kappa(\theta -y(t))$ that is solved exactly by $y(t)=e^{-\kappa t}y(0)+(I_p-e^{-\kappa t})\theta$. Therefore, it is sufficient to have a second order scheme for $\cL-\sum_{m=1}^p (\kappa(\theta-y))_m\partial_{y_m}$, which is the generator of~\eqref{SDE_GeneralY} and~\eqref{SDE_General} when $\kappa=0$. When $\kappa=0$, we have $Y_t=y+c(\tilde{Y_t}-\tilde{Y_0})$ with
$$\tilde{Y_t}=\tilde{Y_0}+ \int_0^t \sqrt{X_s}\left[\bar\rho dZ_s + dW_s\rho \right].$$
We can then focus on getting a second order scheme for $(X,\tilde{Y})$, which amounts to work with $p=d$ and $c=I_d$. It is therefore sufficient to find a second order scheme for the SDE
\begin{align*}
 Y_t &=y + \int_0^t \sqrt{X_s} \left[ \bar\rho dZ_s + dW_s \rho \right],\\
X_t &=x +\int_0^t\left( \Omega +(d-1)\epsilon^2I^n_d  + bX_s+X_sb^{\top} ) \right )ds + \epsilon\int_0^t\sqrt{X_s}dW_sI^n_d + I^n_ddW_s^{\top} \sqrt{X_s},
\end{align*}
with the infinitesimal generator
\begin{align}
\cL =&   \sum_{1\le i,j\le d}(\Omega+(d-1)\epsilon^2I^n_d+bx+xb^\top)_{i,j} \partial_{x_{i,j}}+ \frac 12 \sum_{m=1}^d\sum_{1\le i,j\le d} \epsilon [x_{m,i}(I^n_d \rho)_j +x_{m,j}(I^n_d \rho)_i ] \partial_{x_{i,j}}\partial_{y_m}  \label{IG_bis}\\
&+\frac 12 \sum_{m,m'=1}^d x_{m,m'}\partial_{y_m}\partial_{y_{m'}} + \frac 12 \sum_{1\le i,j,k,l \le d} \epsilon^2 [x_{i,k}  (I^n_d)_{j,l} +x_{i,l} (I^n_d)_{j,k}+x_{j,k} (I^n_d)_{i,l} +x_{j,l}(I^n_d)_{i,k} ]\partial_{x_{i,j}}\partial_{x_{k,l}}. \nonumber
\end{align}

\subsection{A second order scheme}\label{2nd_order_gen}

For $1\le q \le d$, we define $e^q_d\in \posm$ by $(e^q_d)_{k,l}=\indi{k=l=q}$  and $g^q_d \in \R^d$ by $(g^q_d)_k=\indi{q=k}$ so that $I^n_d=\sum_{q=1}^n e^q_d$ and $I^n_d \rho=\sum_{q=1}^n  \rho_q g^q_d$. We define
\begin{align}
\cL^c_q=&  \epsilon^2(d-1)\partial_{x_{q,q}} + \frac 12 \sum_{m=1}^d\sum_{1\le i,j\le d} \epsilon \rho_q [x_{m,i}(g^q_d)_j +x_{m,j}(g^q_d)_i ] \partial_{x_{i,j}}\partial_{y_m}  \label{def_Lq}+\frac {\rho_q^2}2 \sum_{m,m'=1}^d x_{m,m'}\partial_{y_m}\partial_{y_{m'}}\\& + \frac 12 \sum_{1\le i,j,k,l \le d} \epsilon^2 [x_{i,k}  (e^q_d)_{j,l} +x_{i,l} (e^q_d)_{j,k}+x_{j,k} (e^q_d)_{i,l} +x_{j,l}(e^q_d)_{i,k} ]\partial_{x_{i,j}}\partial_{x_{k,l}}.\nonumber
\end{align}
We consider the splitting $\cL=\cL'+\cL''+\sum_{q=1}^n \cL^c_q$ of the operator~\eqref{IG_bis}, with
\begin{align*}
\cL'=& \sum_{1\le i,j\le d}(\Omega+bx+xb^\top)_{i,j} \partial_{x_{i,j}}, \\
\cL''=&  \left(1-\sum_{q=1}^n \rho_q^2 \right) \frac 12 \sum_{m,m'=1}^d x_{m,m'}\partial_{y_m}\partial_{y_{m'}}. \\
\end{align*}
The operator $\cL'$ is the one of the linear ODE $x'(t)=\Omega+(d-1)\epsilon^2I^n_d+bx+xb^\top$ that can be solved exactly and stays in the set of semidefinite positive matrices, see Lemma~27 in~\cite{Alfonsi2}. The operator $\cL''$ is the one of $Y''_t=y''+\sqrt{1-\sum_{q=1}^n \rho_q^2} \sqrt{x} Z_t$, which can be sampled exactly since it is a Gaussian vector with mean $y''$ and covariance matrix $(1-|\rho|^2) t x $. The operator $\cL^c_q$ is the infinitesimal generator of the following SDE
\begin{eqnarray}
\begin{cases} Y_t &= y +\rho_q\int_0^t\sqrt{X_s} dW_sg^q_d,  \\
X_t &=x+\int_0^t(d-1) \epsilon^2e^q_d  ds + \epsilon \int_0^t\sqrt{X_s}dW_se^q_d + e^q_ddW_s^{\top}\sqrt{X_s}.
\end{cases}
\label{diff_Lcq}
\end{eqnarray}
Thus, $X$ follows an elementary Wishart process and stays in~$\posm$. Using the notation of~\cite{Alfonsi2}, $X_t$ follows the law $WIS_d(x,d-1,0,e^q_d,\epsilon^2 t)$. Theorems~9 and~16 in~\cite{Alfonsi2} gives respectively an exact and a second (or higher) discretization scheme for this process. We now explain how to calculate $Y_t$ once that $X_t$ has been sampled. From~\eqref{diff_Lcq}, we have for $1\le i\le d$,
\begin{align*}
d(Y_t)_i=&\rho_q \sum_{j=1}^d  (\sqrt{X_t})_{i,j}(dW_t)_{j,q}, \\
d(X_t)_{q,i}=& \epsilon \sum_{j=1}^d  (\sqrt{X_t})_{i,j}(dW_t)_{j,q} +\indi{i=q} \left[(d-1)\epsilon^2dt +  \sum_{j=1}^d  (\sqrt{X_t})_{q,j}(dW_t)_{j,q} \right]. \\
\end{align*}
This yields to
\begin{align*}
(Y_t)_i&=y_i+ \frac{\rho_q}{\epsilon}((X_t)_{q,i}-x_{q,i}), \text{ if }i\not = q,\\
(Y_t)_q&=y_i+ \frac{\rho_q}{2\epsilon}[(X_t)_{q,q}-x_{q,q}-\epsilon^2(d-1)t].
\end{align*}
Using these formula together with the exact (resp. second order) scheme for~$X_t$, we get an exact (resp. second order) scheme for~\eqref{diff_Lcq}. By using the composition rules~\eqref{compo_rule}, we get a second order scheme for~\eqref{IG_bis}.

\subsection{A faster second order scheme when $\Omega-\epsilon^2I^n_d \in \posm$}\label{2nd_order_fast}

As explained in~\cite{Alfonsi2}, the sampling of each elementary Wishart process in~$\cL_q$ requires a Cholesky decomposition that has a time complexity of $O(d^3)$. Since the second order scheme proposed above calls $n\le d$ times this routine, the whole scheme requires at most $O(d^4)$ operations. However, by adapting an idea that has been already used in~\cite{Alfonsi2} for Wishart processes, it is possible to get a faster scheme if we assume in addition that  $\Omega-\epsilon^2I^n_d \in \posm$. We now present this alternative scheme that only requires $O(d^3)$ operations.

We consider the splitting $\cL=\tilde{\cL}'+\tilde{\cL}''+ \hat{\cL}$ of the operator~\eqref{IG_bis}, with
\begin{align*}
\tilde{\cL}'=& \sum_{1\le i,j\le d}(\Omega-\epsilon^2I^n_d+bx+xb^\top)_{i,j} \partial_{x_{i,j}} \\
\hat{\cL}=&\sum_{1\le i\le n}  d \epsilon^2 \partial_{x_{i,i}}+ \frac 12 \sum_{m=1}^d\sum_{1\le i,j\le d} \epsilon [x_{m,i}(I^n_d \rho)_j +x_{m,j}(I^n_d \rho)_i ] \partial_{x_{i,j}}\partial_{y_m} +  \frac {\sum_{q=1}^n \rho_q^2}2 \sum_{m,m'=1}^d x_{m,m'}\partial_{y_m}\partial_{y_{m'}} \label{IG_bis}\\
& + \frac 12 \sum_{1\le i,j,k,l \le d} \epsilon^2 [x_{i,k}  (I^n_d)_{j,l} +x_{i,l} (I^n_d)_{j,k}+x_{j,k} (I^n_d)_{i,l} +x_{j,l}(I^n_d)_{i,k} ]\partial_{x_{i,j}}\partial_{x_{k,l}}.
\end{align*}
Again,  $\tilde{\cL}'$ is the operator of the linear ODE $x'(t)=\Omega-\epsilon^2I^n_d+(d-1)\epsilon^2I^n_d+bx+xb^\top$ that can be solved exactly and stays in the set of semidefinite positive matrices by Lemma~27 in~\cite{Alfonsi2} since $\Omega-\epsilon^2I^n_d \in \posm$. We have already seen above that the generator $\cL''$  can be sampled exactly, and we focus now on the sampling of~$\hat{\cL}$. It relies on the following result.
\begin{lemma}\label{Lemma_fast_sch} For $x \in \posm$ we consider $c\in \genm$ such that $c^\top c=x$. We define $U_t=c+\epsilon W_tI^n_d$, $X_t=U_t^\top U_t$ and $Y_t=y+\int_0^t U_s^\top dW_s I^n_d \rho$. Then, the process $(X,Y)$  has the infinitesimal generator~$\hat{\cL}$.
\end{lemma}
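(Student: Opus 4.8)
The plan is to verify Lemma~\ref{Lemma_fast_sch} by a direct application of It\^o's formula to the processes $X_t=U_t^\top U_t$ and $Y_t$, and then to read off the drift and diffusion coefficients and compare them term by term with the coefficients appearing in~$\hat{\cL}$. Since $U_t=c+\epsilon W_tI^n_d$ is an affine (in fact linear) function of the matrix Brownian motion $W$, we have $dU_t=\epsilon\, dW_t\, I^n_d$ with zero drift, so the computations are genuinely elementary; the only subtlety is bookkeeping with the indices and the matrix $I^n_d$.

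First I would compute the dynamics of $X$. Writing $dX_t=dU_t^\top U_t + U_t^\top dU_t + d\langle U^\top,U\rangle_t$, the first two terms give the martingale part $\epsilon(I^n_d dW_t^\top U_t + U_t^\top dW_t I^n_d)$, which, since $U_t^\top U_t=X_t$ and on the relevant support $U_t^\top U_t$ has a square root, matches the diffusion term $\epsilon(\sqrt{X_t}dW_t I^n_d + I^n_d dW_t^\top\sqrt{X_t})$ of~\eqref{diff_Lcq}-type dynamics in law (it suffices to match the quadratic covariation, which is what the generator sees). The bracket term $d\langle U^\top,U\rangle_t$ produces the drift: using $(dU_t)_{k,i}=\epsilon\sum_r(dW_t)_{k,r}(I^n_d)_{r,i}=\epsilon(dW_t)_{k,i}\indi{i\le n}$, one gets $d\langle (U^\top)_{i,\cdot},U_{\cdot,j}\rangle_t=\sum_k d\langle (U_t)_{k,i},(U_t)_{k,j}\rangle_t=\epsilon^2 d\,\indi{i=j\le n}\,dt$, so the drift of $X_t$ is $d\epsilon^2 I^n_d$. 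This is exactly the first-order term $\sum_{1\le i\le n} d\epsilon^2\partial_{x_{i,i}}$ in $\hat{\cL}$, and the second-order terms in $\hat{\cL}$ are precisely those coming from $\langle dX_{i,j},dX_{k,l}\rangle$, which one checks equal $\epsilon^2[x_{i,k}(I^n_d)_{j,l}+\dots]dt$ just as in~\eqref{crochets_2}, since only the covariation of $U_t^\top U_t$ (not its drift) enters.

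Next I would compute the dynamics of $Y$: by definition $dY_t=U_t^\top dW_t I^n_d\rho$, which is a pure martingale, so $Y$ contributes no first-order term to the generator, consistent with $\hat{\cL}$ having no $\partial_{y_m}$ term. Then $\langle d(Y_t)_m,d(Y_t)_{m'}\rangle=\sum_{i,j}(I^n_d\rho)_i(I^n_d\rho)_j\langle (U_t^\top dW_t)_{m,i},(U_t^\top dW_t)_{m',j}\rangle$; using $(U_t^\top U_t)_{m,m'}=x_{m,m'}$ this collapses to $|I^n_d\rho|^2 x_{m,m'}dt=(\sum_{q=1}^n\rho_q^2)x_{m,m'}dt$, matching the $\frac12(\sum_q\rho_q^2)\sum x_{m,m'}\partial_{y_m}\partial_{y_{m'}}$ term. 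Finally the cross term $\langle d(Y_t)_m,d(X_t)_{i,j}\rangle$ comes from pairing $U_t^\top dW_t I^n_d\rho$ with $\epsilon(I^n_d dW_t^\top U_t+U_t^\top dW_t I^n_d)$, and a short index computation gives $\epsilon[(X_t)_{m,i}(I^n_d\rho)_j+(X_t)_{m,j}(I^n_d\rho)_i]dt$, matching the mixed second-order term of $\hat{\cL}$. Collecting all these, the generator of $(X,Y)$ is exactly $\hat{\cL}$.

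The main obstacle is not conceptual but notational: one must carefully track how $I^n_d$ on the right of $dW_t$ kills the components with index $>n$, and correctly symmetrize the four-index covariation $\langle dX_{i,j},dX_{k,l}\rangle$; a convenient device is to note $I^n_d=\sum_{q=1}^n e^q_d$ and $I^n_d\rho=\sum_{q=1}^n\rho_q g^q_d$ (as already introduced in~\eqref{def_Lq}) and to expand everything in this basis, after which each term reduces to a single-column Wishart-type computation. One should also remark that matching quadratic covariations and drifts is enough because the law of the diffusion is characterized by its infinitesimal generator (as used repeatedly earlier in the paper), so one need not exhibit a strong-solution identification of $U_t^\top U_t$ with $\sqrt{X_t}dW_t I^n_d+\dots$.
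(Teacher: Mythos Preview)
Your proposal is correct and follows essentially the same route as the paper: apply It\^o's formula to $X_t=U_t^\top U_t$ and to $Y_t$, compute the drift $d\epsilon^2 I^n_d$ of $X$ and the three quadratic covariations $\langle dX_{i,j},dX_{k,l}\rangle$, $\langle dY_m,dY_{m'}\rangle$, $\langle dY_m,dX_{i,j}\rangle$, and identify them term by term with~$\hat{\cL}$. The paper does this directly from the component formula $d(X_t)_{i,j}=\epsilon\sum_k[(U_t)_{k,i}(dW_t)_{k,j}\indi{j\le n}+(U_t)_{k,j}(dW_t)_{k,i}\indi{i\le n}]+d\epsilon^2\indi{i=j\le n}dt$ without passing through a $\sqrt{X_t}$ representation, but as you note only the covariations matter for the generator, so the detour is harmless.
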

\begin{proof}For $1\le i,j,m\le d$, we have
$d(X_t)_{i,j}=   \epsilon  \sum_{k=1}^d \left((U_t)_{k,i} (dW_t)_{k,j}\indi{j \le n} + (U_t)_{k,j} (dW_t)_{k,i}\indi{i \le n}\right)+ \indi{i=j \le n} d \epsilon^2  dt$ and $d(Y_t)_m=\sum_{k,l=1}^d (U_t)_{k,m} (dW_t)_{k,l}(I^n_d\rho)_l$.
This leads to
\begin{align*}
\langle d(Y_t)_m, d(Y_t)_{m'} \rangle=& \sum_{k,l=1}^d (U_t)_{k,m}(U_t)_{k,m'} (I^n_d\rho)_l^2 dt= \left(\sum_{l=1}^n \rho_l^2  \right)(X_t)_{m,m'}dt,\\
\langle d(Y_t)_m, d(X_t)_{i,j} \rangle=&  \epsilon [(I^n_d \rho)_j (X_t)_{m,i}+ (I^n_d\rho)_i (X_t)_{m,j}]dt,\\
\langle d(X_t)_{i,j}, d(X_t)_{k,l} \rangle=&\epsilon^2 [(X_t)_{i,k}  (I^n_d)_{j,l} +(X_t)_{i,l} (I^n_d)_{j,k}+(X_t)_{j,k} (I^n_d)_{i,l} +(X_t)_{j,l}(I^n_d)_{i,k} ] dt,
\end{align*}
which precisely gives the generator~$\hat{\cL}$.
\end{proof}

Thanks to Lemma~\ref{Lemma_fast_sch}, it is sufficient to construct a second order scheme for $(U,Y)$. Since $\langle d(Y_t)_m, d(U_t)_{i,j} \rangle = \epsilon (U_t)_{i,m} (I^n_d\rho)_jdt$, the infinitesimal generator~$\bar{\cL}$ of $(U,Y)$ is given by
$$ \bar{\cL}=\frac{\epsilon^2}{2} \sum_{i=1}^d\sum_{j=1}^n\partial^2_{x_{i,j}}+\frac{\epsilon}{2}\sum_{i,m=1}^d\sum_{j=1}^n \rho_j x_{i,m}\partial_{x_{i,j}}\partial_{y_m}+\frac {\sum_{q=1}^n \rho_q^2}{2}\sum_{m,m'=1}^d(x^\top x)_{m,m'}\partial_{y_m}\partial_{y_m'}.$$
We use now the splitting $\bar{\cL}=\sum_{q=1}^n \bar{\cL}_q$ with
$$\bar{\cL}_q=\frac{\epsilon^2}{2} \sum_{i=1}^d \partial^2_{x_{i,q}}+\frac{\epsilon}{2}\sum_{i,m=1}^d  \rho_q x_{i,m}\partial_{x_{i,q}}\partial_{y_m}+\frac {  \rho_q^2}{2}\sum_{m,m'=1}^d(x^\top x)_{m,m'}\partial_{y_m}\partial_{y_m'}. $$
By straightforward calculus, we find that $\bar{\cL}_q$ is the generator of the following SDE
\begin{equation*}
d Y_t =  \rho_qU_t^{\top} dW_tg_d^q,\,\, dU_t =\epsilon dW_te_d^q.
\end{equation*}
We note that only the $q^{\textup{th}}$ row of~$U$ is modified. For $1\le i\le d$ we have
$d(U_t)_{i,q}=\epsilon (dW_t)_{i,q}$ and $d(Y_t)_m= \rho_q \sum_{j=1}^d (U_t)_{j,m}(dW_t)_{j,q}$. This yields to
\begin{align*}
(Y_t)_m&= (Y_0)_m+ \rho_q \sum_{j=1}^d  (U_0)_{j,m}(W_t)_{j,q} \text{ for } m \not = q,\\
(Y_t)_q&= (Y_0)_q+\rho_q \sum_{j=1}^d  (U_0)_{j,q}(W_t)_{j,q} + \frac{\epsilon \rho_q}{2} \sum_{j=1}^d \{ (W_t)_{j,q}^2-t \}.
\end{align*}
By using these formulas, we can then sample exactly $(U_t,Y_t)$ and then get a second order scheme for~$\hat{\cL}$. We note that the simulation cost of $\bar{\cL}_q$ requires $O(d)$ operations and then the one of~$\bar{\cL}$ requires $O(d^2)$ operations. Since a matrix multiplication requires $O(d^3)$ operations, this second order scheme for~$\bar{\cL}$ and then for $\cL$ requires $O(d^3)$ operations instead of $O(d^4)$ for the scheme described in Subsection~\ref{2nd_order_gen}.

\begin{remark}
As already mentioned, the dependence between the processes $X$ and $Y$ is the same as the one proposed by Da Fonseca, Grasselli and Tebaldi~\cite{F} for a model on asset returns. Therefore, we can use the same splittings as the one proposed in Subsections~\ref{2nd_order_gen} and~\ref{2nd_order_fast} to construct second order schemes for their model.
\end{remark}

\subsection{Numerical results}
\begin{figure}[h!]
\centering
\includegraphics[height = 4.5cm]{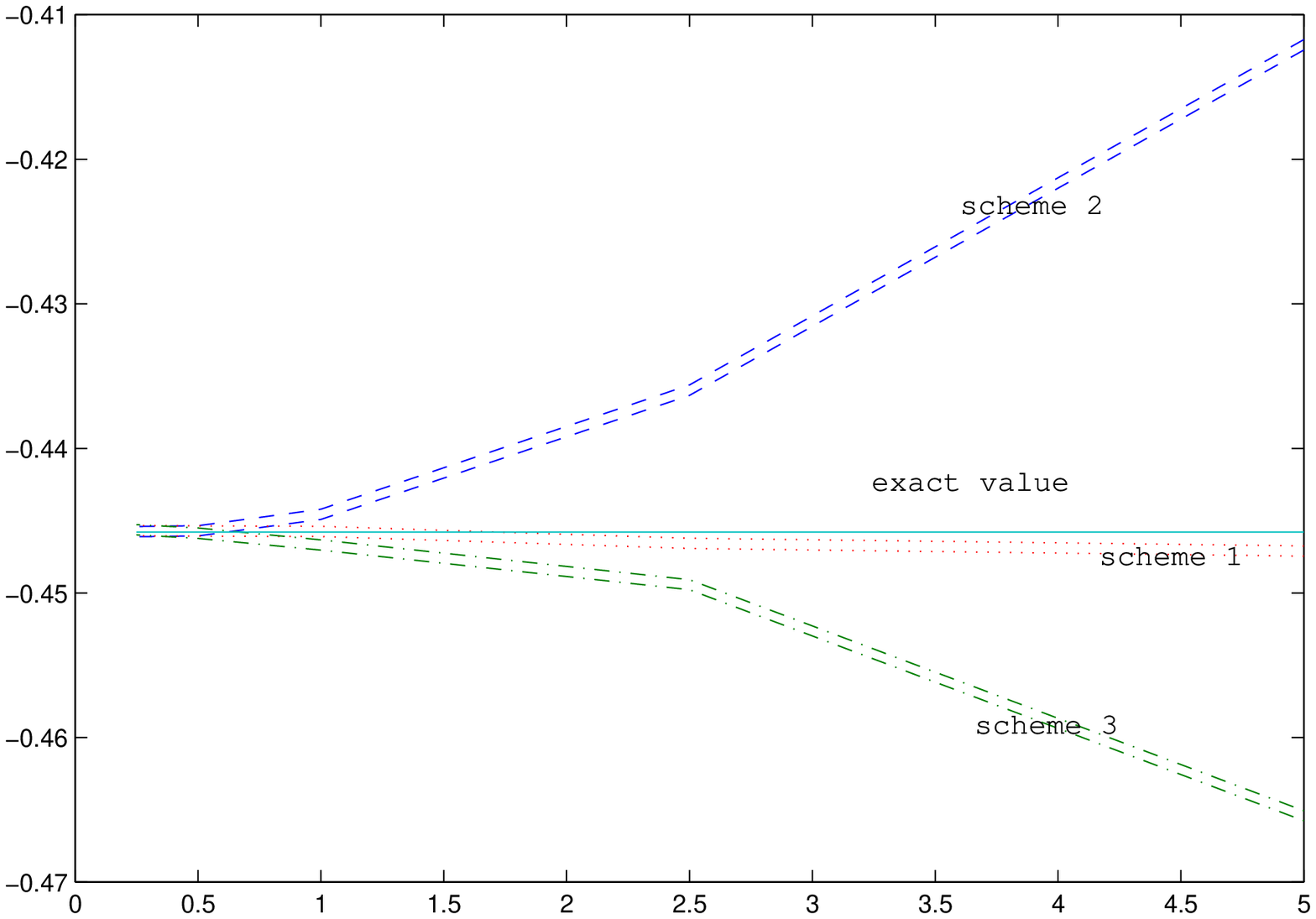}
\hspace{-0.7cm}
\includegraphics[height = 4.5cm]{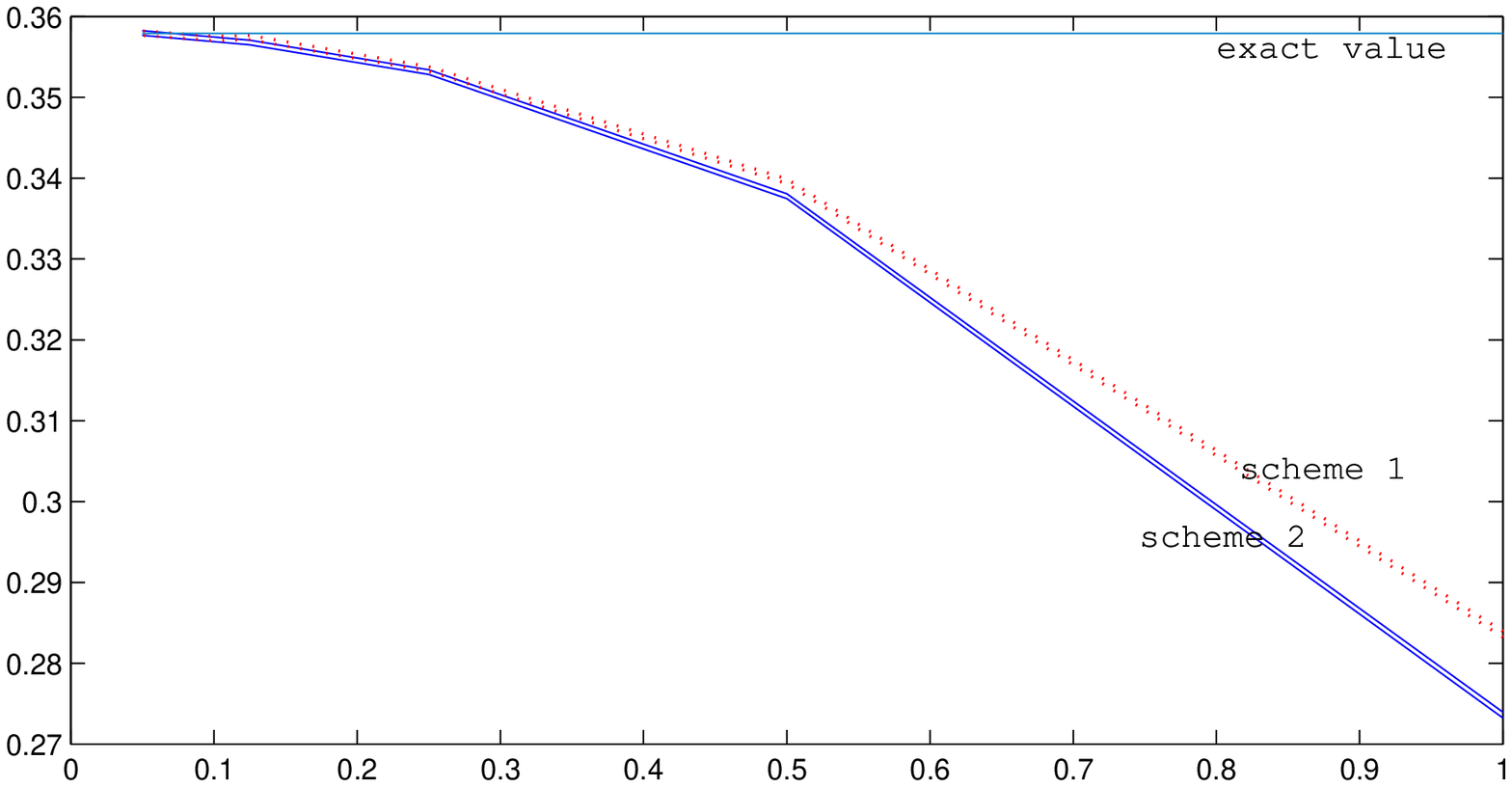}
\caption{Weak error convergence. Parameters: $p=d=3$, $10^7$ Monte Carlo samples, $T=5$. The real value of $\E\left[\exp\left(-i\left(\Tr(\Gamma X_T)+\Lambda^{\top}Y_T\right)\right)\right]$, as a function of the time step $T/N$. Left: $\Gamma=0.05 I_d, \Lambda=0.02\mathbf{1}_d$ and the diffusion parameters $x=0.4I_d, y=0.2\mathbf{1}_d, \Omega=2.5I_d, n=d, \rho=0, b=0, \kappa=0, c=I_d$. The value obtained by solving the ODE: $-0.445787$. Right: $\Gamma=0.2I_d+0.04q, \Lambda=0.2\mathbf{1}_d$ and the diffusion parameters $x=0.4I_d+0.2q, y=0.2 \mathbf{1}_d, \Omega=0.5I_d, n=d, \rho=-0.3\mathbf{1}_d, b=-0.5I_d, \kappa=0.1I_p, c=I_d$, where $q_{i,j}=\indi{i\neq j}$. The value obtained by solving the ODE: $0.357901$. 
For each scheme, the two curves represent the upper and lower bound of the 95\% confidence interval.}
\label{SchemeCV}
\end{figure}

We now turn to the empirical analysis of the convergence of the discretization schemes we have proposed. We will use the following notations.
\begin{itemize}
  \item Scheme 1 is the second order scheme given in Subsection~\ref{2nd_order_gen}, where we use the exact sample of the Wishart part and the exact simulation the Gaussian variables.
  \item Scheme 2 is the second order scheme given in Subsection~\ref{2nd_order_gen}, where  we use the second order scheme for the Wishart part  and replacing the simulation of Gaussian variables by random variables that matches the five first moments, see Theorem~16 and equation~(36) in~\cite{Alfonsi2}.
  \item Scheme 3 is the second order scheme given in Subsection~\ref{2nd_order_fast}.
\end{itemize}

\noindent In order to assess that the potential second order schemes we have proposed for $\cL$ give indeed a weak error of order~$2$, we start by analyzing the weak error for quantities that we can compute analytically. Namely, we consider $\E\left[\exp\left(-i\left(\Tr(\Gamma X_T)+\Lambda^{\top}Y_T\right)\right)\right]$, which can be calculated by solving a system of differential equations similar to~\eqref{LaplaceGenRiccati}. We then compare the values obtained by Monte Carlo simulation and the value obtained by solving the system of differential equation.
As shown by Figures~\ref{SchemeCV}, we observe a weak error which is compatible with the rate of  $O(1/N^2)$. When it is well defined, Scheme 3 has to be preferred since it is much faster than the others.

\section{Comparison of the different numerical methods}\label{sec_comp_meth}

The goal of this section is to compare the computational time needed to price vanilla instruments in the model by using the different numerical methods. We consider the case of a 6M$\times$1Y caplet with strike 1\% , which means $T=1$, $\delta=1/2$ and its price is given by $$\frac{1}{\delta}\E[e^{-\int_0^{T}r_sds}\left(1-(1+K\delta)P_{T,T+\delta} \right)^+]=\frac{P_{0,T}}{\delta} \E^T[\left(1-(1+K\delta)P_{T,T+\delta} \right)^+].$$ 
We will compare the expansion and the Monte-Carlo method with respect to the Fourier inversion method presented by Carr and Madan~\cite{CM} and Lee~\cite{L}. Their approach can be directly applied  for Caplets by working with the forward Caplet price. Let us note that this method can be adapted for swaptions by making the same approximation as the one that we use for the expansion, see Schrager and Pelsser~\cite{citeulike:824660} and Singleton and Umantsev~\cite{MAFI:MAFI427}. 
We consider here the four following numerical methods.
\begin{itemize}
\item The Monte-Carlo method that consists in using the second order scheme for $(X,Y)$ with a time step of $1/8$ and $10000$ paths in order to approximate $\frac{1}{\delta}\E[e^{-\int_0^{T}r_sds}\left(1-(1+K\delta)P_{T,T+\delta} \right)^+]$.
\item The expansion up to order~$2$. The integrals that define the coefficients $c_i$, $d_i$ and $e_i$ are approximated by using a trapezoidal rule and a time step of $1/20$. 
\item The Fourier transform under $\Px^T$. Starting from the expectation under the $T$-forward measure, we use the construction of Carr and Madan~\cite{CM}. In equation~(5) of~\cite{CM}, we use $\alpha=1.25$, truncate the integral at $375$ and use a Simpson's rule with a discretization step of $1/8$. Since we calculate here only one price, we do not use the FFT which would have generated further constraints between the discretization and strike grids. 
\item The Fourier transform under $\Px^{T+\delta}$. This is the same method starting with formula~\eqref{CallForward}, and we use the same parameters to approximate the integral and for $\alpha$. 
\end{itemize}

\begin {table}[h!]
\begin{center}
\begin{tabular}{|c|l|l|}
  \hline
  Pricing Method & Price (bp)  & Cpu time (s)\\
  \hline
  MC price &51.75 $\pm$ 1.46  (95\%  CI) & 43.3 \\
  Expansion & 52.33 & 0.686 \\
  Fourier under $P^T$ & 53.84  & 31.6 \\
  Fourier under $P^{T+\delta}$ & 52.87  & 33.6 \\
  \hline
\end{tabular}
\caption{Price of the 6M$\times$1Y caplet with strike 1\% using different methods with parameter set~\eqref{param_set} and $\rho=(-0.4,-0.2)$. Computations are made on a personal laptop with 4Go RAM and a 2.13GHz CPU.}
\end{center}
\end {table}
\noindent The striking fact is that the method based on the Fourier transform is not so efficient in this context, even though the Fourier inversion is in dimension one. The reason is that the evaluation of the Fourier transform requires to solve numerically matrix Riccati differential equations, for which we take a time step of $1/8$. Figure~\ref{FourierCv2} indicates on our case that a minimum of 2000 evaluations is necessary to have a precision similar to the Monte-Carlo method. Thus, a basic application of the method of Carr and Madan is not very efficient: the bottleneck is to find a smarter way to calculate the characteristic function. In comparison, the Monte Carlo method is not much more time consuming and allows to calculate the price for all strikes and maturities at the same time. Last, we observe that the expansion method is much faster than the others, but is limited to short maturities as indicated in Subsection~\ref{Num_sec_exp}. It can therefore be a tool to calibrate quickly the model to some key features such as the at the money price and skew.

\begin{figure}[h!]
\centering
\includegraphics[scale = 0.4]{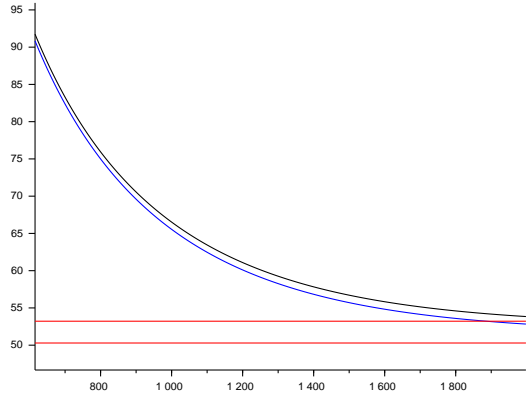}
\caption{Convergence of the Fourier transform price of the 6M$\times$1Y caplet with strike 1\% and a time step of $1/8$, in function of the number of discretization steps $n_{st}$. The integration is thus made on $[0,n_{st}/8]$. The parallel lines indicate the 95\% confidence interval obtained by MC.}
\label{FourierCv2}
\end{figure}

\section*{Conclusion}

The contribution of our paper is twofold. First, the purpose of this paper is to define a Wishart driven affine term structure model for interest rates model, in which the parameters and state variables of the model admit a clear interpretation in terms of the yield curve dynamics, and to provide an efficient numerical framework to implement the model. Other affine term structure models involving Wishart processes have been proposed for example by Bensusan~\cite{HB} or Gnoatto~\cite{Gnoatto}. A pitfall of general affine term structure model is to offer an abundant parametrization with few intuitions for the practitioner. Here, we believe that presenting the model as a perturbation of the standard LGM model is a good way to get a grip on it, to have a better understanding of the parameters and to have a starting point for the calibration procedure. Let us mention here that getting a reliable and stable calibration procedure of the model is beyond the scope of this paper. In particular, the choice of the dimensions $p$ and $d$ should be discussed on real data. Also, we have made the choice in this paper to present the model with constant (as opposite to time dependent) parameters: only the factors are meant to describe the state of the interest rate market. Thus, this version of the model has  a priori a limited flexility to calibrate to the swaption volatility cube compared to fully non-homogeneous term structure model with time dependent parameters such as the stochastic volatility forward Libor model of Piterbarg~\cite{citeulike:5804271}, and the stochastic volatility Cheyette model considered by Andreasen in~\cite{BackFuture}.  A full discussion on the calibration of our model as well as the comparison to other models is left for further research. 

The other contribution of the paper is to investigate different numerical methods for the model. We know that having efficient numerical methods is a prerequisite to use a model. Besides, our results can be interesting for other models based on Wishart dynamics. As the state variables dynamics is affine, their Fourier and Laplace transforms are tractable and can be obtained by solving Ordinary Differential Equations. Therefore Fourier transform pricing methods can be applied to price vanilla interest rates options in the model. However, the results of our numerical investigation suggest that standard Fourier based pricing methods suffer from numerical efficiency. This is due to the rather lengthy evaluation of the characteristic function together with a slow convergence rate of the Fourier transform discretization. A smarter way to evaluate the characteristic function and to solve the corresponding differential equation has to be investigated to make this method more attractive. As an alternative, we have developed a pricing method for vanilla interest rates options based on a perturbation of the infinitesimal generator of the state variables. This method provides a fast pricing tool for the products which would typically be used for model calibration. The method is particularly efficient for short expiries, but proves limitations for long dated options. Also, the expansion provides analytical expressions for the implied volatility of caplets and swaptions. This is important to confirm the intuitions on the role of the parameters and it can be used to initialize the calibration routine. Last, we propose a second order discretization scheme for the model, which is useful to run a Monte Carlo method. This scheme is easy to implement and very efficient in practice. Besides, it can be adapted easily to a wider range of financial models that use the same dependence structure between the vector~$Y$ and its instantaneous Wishart covariance matrix~$X$, such as the Wishart affine stochastic correlation model devloped by Da Fonseca et al.~\cite{DFGI,F}. Moreover, it is up to our knowledge the first second order discretization scheme that is able to handle this instantaneous covariance structure.

\appendix

\section{Explicit formulas of the price expansion}\label{ExpansionCalcul}

\subsection{Caplets price expansion}\label{CapletsExpansionCalcul}

We first write the expansion up to order~$1$ for $D$, and we get from~\eqref{LaplaceGenRiccati} that $D(t)=D_0(t) +\epsilon D_1(t) +O(\epsilon^2)$ with
$\dot{D}_0=D_0b+bD_0+\frac{1}{2}c^{\top}B B^{\top}c-\gamma$, $D_0(0)=0$ and $\dot{D}_1=D_1b+bD_1+\frac{1}{2}D_0 I^n_d \rho B^\top c+\frac{1}{2} c^\top B \rho^\top I^n_d D_0$, $D_1(0)=0$. We then obtain
\begin{eqnarray}
     \label{ExpSupportD0}
D_0(t)&=&e^{b^{\top} t}\left(\int_0^{t} e^{-b^{\top}s}\left(\frac{1}{2}c^{\top}B(s)B(s)^{\top}c-\gamma\right)e^{-bs} ds \right)e^{b t}\\
    \label{ExpSupportD1}
D_1(t)&=&\frac{1}{2}e^{b^{\top} t}\left(\int_0^{t} e^{-b^{\top}s}\left(c^{\top}B(s)\rho^{\top}I^n_dD_0(s)+D_0(s) I^n_d\rho B(s)^{\top}c\right)e^{-bs} ds \right)e^{b t}.
\end{eqnarray}
We recall that $X^0_{s}(x)$ is defined by~\eqref{defX0sx}.
The coefficients of formulas~\eqref{FormulaCapletExpOrder1} and~\eqref{FormulaCapletExpOrder2} are given by
\begin{align*}
c_1(t,T,\delta,x)&= \int_t^T  \Delta B^\top(s,T,\delta) c  X^0_{s-t}(x) \partial_x v (s,T,\delta)  I^n_d \rho ds, 
\\
c_2(t,T,\delta,x)&= \int_t^T   \Delta B^{\top}(s,T,\delta) c X^0_{s-t}(x) \Delta D_0(s,T,\delta) I^n_d\rho +  B^\top(T+\delta-s) c X^0_{s-t}(x)  \partial_{x}v (s,T,\delta) I^n_d \rho ds,   
\end{align*}

\begin{align*}
e_1(t,T,\delta,x) =& \int_t^T  c_1(s,T,\delta,x) \Delta B^\top(s,T,\delta)   c X^0_{s-t}(x)  \partial_{x}v(s,T,\delta) I^n_d \rho ds, \\
e_2(t,T,\delta,x) =&  \int_t^T  c_1(s,T,\delta,x)  [ (\Delta B^{\top}(s,T,\delta) c X^0_{s-t}(x) \Delta D_0(s,T,\delta) I^n_d\rho) +B^\top(T+\delta-s) c X^0_{s-t}(x) \partial_{x}v(s,T,\delta) I^n_d \rho]  \\&+ c_2(s,T,\delta,x) \Delta B^\top(s,T,\delta)  c X^0_{s-t}(x) \partial_{x}v(s,T,\delta) I^n_d \rho ds, \\
e_3(t,T,\delta,x) =&     \int_t^T   c_2(s,T,\delta,x) [    (\Delta B^{\top}(s,T,\delta) c X^0_{s-t}(x) \Delta D_0(s,T,\delta) I^n_d\rho) + B^\top(T+\delta-s) c X^0_{s-t}(x) \partial_{x}v(s,T,\delta) I^n_d \rho ]ds ,      \\
e_4(t,T,\delta,x) =& \int_t^T     2 \Delta B^\top(s,T,\delta) c  X^0_{s-t}(x) \partial_{x}c_1 (s,T,\delta) I^n_d \rho ds ,\\
e_5(t,T,\delta,x) =& \int_t^T  2 B^\top(T+\delta-s)  c X^0_{s-t}(x) \partial_{x}c_1(s,T,\delta) I^n_d \rho    + 2 \Delta B^\top(s,T,\delta) c X^0_{s-t}(x)  \partial_{x}c_2(s,T,\delta) I^n_d \rho ds,\\
e_6(t,T,\delta,x) =&\int_t^T  2 B^\top(T+\delta-s) c X^0_{s-t}(x) \partial_{x}c_2(s,T,\delta) I^n_d \rho ds .\\
\end{align*}
and
\begin{align*}
d_1(t,T,\delta,x) =& \int_t^T \frac12 \Tr\left[ I^n_d   \partial_{x}v(s,T,\delta)  X^0_{s-t}(x) \partial_{x}v (s,T,\delta)\right] ds \\
d_2(t,T,\delta,x) =&  \int_t^T 2 \Tr\left[ \Delta D_0(s,T,\delta) X^0_{s-t}(x) \partial_{x}v(s,T,\delta) I^n_d \right]   ds \\
d_3(t,T,\delta,x) =&\int_t^T   \left(2 \Tr(\Delta D_0(s,T,\delta) I^n_d \Delta D_0(s,T,\delta) X^0_{s-t}(x) )+(\Delta B^{\top}(s,T,\delta) c X^0_{s-t}(x) \Delta D_1 (s,T,\delta) I^n_d\rho) \right)  \\
& +  \frac12 \Tr \left[ ((d-1)I^n_d+4X^0_{s-t}(x) D_0(T+\delta-s) I^n_d)   \partial_{x}v(s,T,\delta) \right] ds .
\end{align*}

\subsection{Swaption price expansion}\label{SwaptionExpansionCalcul}

We have
 $D^S=D_0^S+\epsilon D_1^S + o (\epsilon),$
  with  $$D_i^S(t) = \omega^0_0D_i(T-t)-\omega^m_0D_i(T+m\delta-t)-S_0(T,m,\delta)\sum_{k=1}^m\omega^k_0D_i(T+k\delta-t),\quad i=0,1,$$
 where the functions $D_0$ and $D_1$ are given by \eqref{ExpSupportD0} and \eqref{ExpSupportD1}.
The coefficients of formulas~\eqref{FormulaSwaptionExpOrder1} and~\eqref{FormulaSwaptionExpOrder2} are given by
\begin{align*}
c^S_1(t,T,x)=& \int_t^T   B^S(u)^\top c X^0_{u-t}(x) \partial_{x}v^S(u,T)  I^n_d \rho du, \\
c^S_2(t,T,x)=&\int_t^T  B^S(u)^{\top}c X^0_{u-t}(x) D_0^S(u) I^n_d\rho +  B^A(u)^\top c X^0_{u-t}(x) \partial_{x} v^S (u,T) I^n_d \rho du,
\end{align*}
\begin{align*}
d^S_1(t,T,x) =& \int_t^T  \frac{1}{2}  \Tr(I^n_d \partial_x v^S(u,T)  X^0_{u-t}(x) \partial_x v^S(u,T))   du, \\
d^S_2(t,T,x) =&  \int_t^T    2 \Tr(D^S_0(u) X^0_{u-t}(x)  \partial_x v^S(u,T) I^n_s)  du , \\
d^S_3(t,T,x) =&   \int_t^T [2 \Tr( D_0^S(u) I^n_d D_0^S(u)X^0_{u-t}(x) ) + (B^S(u))^{\top}c X^0_{u-t}(x)  D_1^S(u) I^n_d\rho] \\ &+ \Tr \left( [2X^0_{u-t}(x) D^A_0(u)I^n_d + \frac{1}{2}(d-1)I^n_d] \partial_x v^S(u,T)  \right)  du  ,
\end{align*}
\begin{align*}
e^S_1(t,T,x) =& \int_t^T  c^S_1(u,T,x) (B^S(u))^\top   c X^0_{u-t}(x)  \partial_{x}v^S(u,T) I^n_d \rho du, \\
e^S_2(t,T,x) =&  \int_t^T  c^S_1(u,T,x) \left[ B^S(u)^{\top}c X^0_{u-t}(x) D_0^S(u) I^n_d\rho +  B^A(u)^\top c X^0_{u-t}(x) \partial_{x}v^S(u,T)  I^n_d \rho \right]  \\
& + c^S_2(u,T,x)   B^S(u)^\top c X^0_{u-t}(x) \partial_{x}v^S(u,T)  I^n_d \rho   du , \\
e^S_3(t,T,x) =&     \int_t^T   c^S_2(u,T,x)  \left[ B^S(u)^{\top}c X^0_{u-t}(x) D_0^S(u) I^n_d \rho + B^A(u)^\top c X^0_{u-t}(x) \partial_{x}v^S(u,T)  I^n_d \rho \right] \\&+ 2    B^S(u)^\top c X^0_{u-t}(x) \partial_{x}c^S_1(u,T)  I^n_d \rho du ,      \\
e^S_4(t,T,x) =& \int_t^T  2  B^A(u)^\top c  X^0_{u-t}(x)  \partial_{x}c^S_1(u,T)  I^n_d \rho  + 2  B^S(u)^\top c  X^0_{u-t}(x)  \partial_{x}c^S_2(u,T)  I^n_d \rho     du ,\\
e^S_5(t,T,x) =& \int_t^T 2  B^A(u)^\top c X^0_{u-t}(x) \partial_{x}c^S_2(u,T)  I^n_d \rho   du.\\
\end{align*}

\section{Proof of Proposition~\ref{prop_ergo}}\label{ErgoExploSec}

We first recall the following useful result
\begin{equation}\label{trace_posit}
\forall x,y \in \posm, \ \Tr(xy)\ge 0,
\end{equation}
which comes easily from $\Tr(xy)=\Tr(\sqrt{x}y\sqrt{x})$ and $\sqrt{x}y\sqrt{x}\in \posm$.

For $x,y \in \symm$, we use the notation $x\le y$ if $y-x \in \posm$. By assumption, there is
$\mu>0$ such that $2 \mu I_d \le -(b+b^{\top})   $. We now apply Proposition~\ref{FourLapGeneralProp} with $\bar{\Lambda}=0$ and $\bar{\Gamma}=0$.
Since $\| \lambda(t)\| \le \|\Lambda \|$, there is a constant $h>0$ small enough such that for any $\Lambda \in \R^d$ satisfying $\|\Lambda \|< h$ we have
$$ \forall t\ge 0,  \mu I_d \le -[ b+ \frac{\epsilon}{2} I^n_d\rho \lambda^{\top} c +(b+\frac{\epsilon}{2} I^n_d\rho \lambda^{\top} c)^{\top}]  \text{ and }  \frac{1}{2}c^{\top} \lambda\lambda^{\top} c \le \frac{\mu^2}{8\epsilon^2}I_d$$
By choosing $\Upsilon=\frac{\mu}{4\epsilon^2}I_d$, we see that the condition~\eqref{cond2_ups} is satisfied since $\frac{\mu^2}{4\epsilon^2}I_d -\frac{\mu^2}{8\epsilon^2}I^n_d - \frac{1}{2}c^{\top} \lambda\lambda^{\top} c\in \posm $ for all $t\ge 0$. Thus, the conclusions of Proposition~\ref{FourLapGeneralProp} hold for any $\Lambda \in \R^d$ and $\Gamma \in \symm$ such that  $\|\Lambda \|< h$ and $\Gamma \le \frac{\mu}{4\epsilon^2}I_d $, and we have $\fctgm(t) \le \frac{\mu}{4\epsilon^2}I_d $ for any $t\ge 0$. We now want to prove that $\lambda(t)\underset{t \rightarrow +\infty}\rightarrow 0$, $\fctgm(t)\underset{t \rightarrow +\infty}\rightarrow 0$ and $\eta(t)$ converges when $t \rightarrow +\infty$. This will prove the convergence to the stationary law by L\'evy's theorem.

From~\eqref{LaplaceGenRiccati}, we have
$$\frac{1}{2} \frac{d}{dt}  \Tr(\fctgm^2) =2\epsilon^2 \Tr( \fctgm I^n_d\fctgm^2)+ \Tr(\fctgm^2 [ b+ \frac{\epsilon}{2} I^n_d\rho \lambda^{\top} c +(b+\frac{\epsilon}{2} I^n_d\rho \lambda^{\top} c)^{\top}] )+ \Tr( \fctgm \frac{1}{2}c^{\top} \lambda\lambda^{\top} c) .$$
By~\eqref{trace_posit}, we get
\begin{align*}
\frac{1}{2} \frac{d}{dt}  \Tr(\fctgm^2) & \le \frac{\mu}{2} \Tr( \fctgm I^n_d\fctgm) - \mu \Tr(\fctgm^2)+ \frac{\mu}{4\epsilon^2} \Tr(  \frac{1}{2}c^{\top} \lambda\lambda^{\top} c). 
\end{align*}
Since  $\Tr( \fctgm I^n_d\fctgm)\le \Tr(\fctgm^2)$, we get by Gronwall's lemma
$$\frac{1}{2} \Tr(g(t)^2) \le \frac{1}{2} \Tr(\Gamma^2)e^{- \mu t}+  \frac{\mu}{4\epsilon^2}  \int_0^t\Tr\left(\left[ \frac{1}{2}c^\top  \lambda(s) \lambda^\top (s) c \right]^2 \right)e^{-\mu (t-s)} ds.$$
We now use that the entries of~$\lambda$ decay exponentially. Since $\int_0^te^{-\mu' s}e^{-\mu (t -s)}ds\underset{t \rightarrow +\infty}{=}O(e^{-\frac{\min(\mu,\mu')}{2}t})$ for $\mu,\mu'>0$, we get that there exists $C,\nu>0$ such that $\frac{1}{2} \Tr(g(t)^2) \le C e^{-\nu t}$. This gives that $\fctgm(t)\underset{t \rightarrow +\infty}\rightarrow 0$ and that $\eta(t)=\int_0^t \lambda^{\top}(s) \kappa\theta+\mathrm{Tr}\left(\fctgm(s)(\Omega+\epsilon^2(d-1)I^n_d)\right) ds$ converges. $\square$

\bibliographystyle{abbrv}
\bibliography{BiblioExpansion}

\end{document}